\newcommand{\OPT}{\operatorname{OPT}}
\newcommand{\rcover}{\mathcal{E}}
\newcommand{\rcoveropt}{\rcover^*}
\newcommand{\assign}{f}
\newcommand{\alg}{\mathcal{A}}
\newcommand{\interval}{\Gamma}
\newcommand{\batch}{B}
\newcommand{\batches}{\mathcal{B}}
\newcommand{\dpath}{R}
\newcommand{\buffer}{\beta}
\newcommand{\yintopt}{\tilde{y}} 
\newcommand{\xintopt}{\tilde{x}} 
\newcommand{\server}{\textsc{Batch-Server}}
\newcommand{\kserver}{$k$-\textsc{Server}}
\newcommand{\reduced}{\textsc{Server}}
\newcommand{\intervals}{\mathcal{I}}
\newtheorem{theorem}{Theorem}
\newtheorem{lemma}[theorem]{Lemma}
\newtheorem{definition}{Definition}
\newtheorem{example}{Example}
\begin{document}

\title{A Bicriteria Approximation for the Reordering Buffer Problem\thanks{This work was supported in part by NSF awards CCF-0643763 and
CNS-0905134.}}


\author{Siddharth Barman\thanks{Computer Sciences Department, University of Wisconsin--Madison. \tt{sid@cs.wisc.edu}} 
\and Shuchi Chawla\thanks{Computer Sciences Department, University of Wisconsin--Madison. \tt{shuchi@cs.wisc.edu}}
\and Seeun Umboh\thanks{Computer Sciences Department, University of Wisconsin--Madison. \tt{seeun@cs.wisc.edu}}
}
\date{}
\maketitle

\begin{abstract}
In the reordering buffer problem (RBP), a server is asked to process a sequence of requests lying in a metric space. To process a request the server must move to the corresponding point in the metric. The requests can be processed slightly out of order; in particular, the server has a buffer of capacity $k$ which can store up to $k$ requests as it reads in the sequence. The goal is to reorder the requests in such a manner that the buffer constraint is satisfied and the total travel cost of the server is minimized. The RBP arises in many applications that require scheduling with a limited buffer capacity, such as scheduling a disk arm in storage systems, switching colors in paint shops of a car manufacturing plant, and rendering $3$D images in computer graphics.

  We study the offline version of RBP and develop bicriteria approximations. When the underlying metric is a tree, we obtain a solution of cost no more than $9 \OPT$ using a buffer of capacity $4k + 1$ where $\OPT$ is the cost of an optimal solution with buffer capacity $k$. Constant factor approximations were known previously only for the uniform metric (Avigdor-Elgrabli et al., 2012). Via randomized tree embeddings, this implies an $O(\log n)$ approximation to cost and $O(1)$ approximation to buffer size for general metrics. Previously the best known algorithm for arbitrary metrics by Englert et al. (2007) provided an $O(\log^2 k \log n)$ approximation without violating the buffer constraint.
\end{abstract}

\section{Introduction}
We consider the reordering buffer problem (RBP) where a server with buffer capacity $k$ has to process a sequence of requests lying in a metric space. The server is initially stationed at a given vertex and at any point of time it can store at most $k$ requests. In particular, if there are $k$ requests in the buffer then the server must process one of them (that is, visit the corresponding vertex in the metric space) before reading in the next request from the input sequence. The objective is to process the requests in an order that minimizes the total distance travelled by the server. 

RBP provides a unified model for studying scheduling with limited buffer capacity. Such scheduling problems arise in numerous areas including storage systems, computer graphics, job shops, and information retrieval (see~\cite{Racke02onlinescheduling,spieckermann2004sequential,krokowski2004reducing,Blandford2002}). For example, in a secondary storage system the overall performance critically depends on the response time of the underlying disk devices.  Hence disk devices need to schedule their disk arm in a way that minimizes the \emph{mean seek time}. Specifically, these devices receive read/write requests which are located on different cylinders and they must move the disk arm to the proper cylinder in order to serve a request. The device can buffer a limited number of requests and must deploy a scheduling policy to minimize the overall service time. Note that we can model this {disk arm scheduling problem} as a RBP instance by representing the disk arm as a server and the array of cylinders as a metric space over read/write requests. 

The RBP can be seen to be \textrm{NP}-Hard via a reduction from the traveling salesperson problem. We study approximation algorithms. RBP has been considered in both online and offline contexts. In the online setting the entire input sequence is not known beforehand and the requests arrive one after the other. This setting was considered by Englert et al.~\cite{englert2007reordering}, who developed an $O(\log^2 k \log n)$-competitive algorithm. To the best of our knowledge this is the best known approximation guarantee for RBP over arbitrary metrics both in the online and offline case. 

RBP remains \textrm{NP}-Hard even when restricted to the uniform metric (see \cite{corr/abs-1009-4355}). In fact the uniform metric is an interesting special case as it models scheduling of paint jobs in a car manufacturing plant. In particular, switching paint color is a costly operation; hence, paint shops temporarily store cars and process them out of order to minimize color switches. Over the uniform metric, RBP is somewhat related to paging. However, unlike for the latter, simple greedy strategies like First in First Out and Least Recently Used yield poor competitive ratios (see \cite{Racke02onlinescheduling}). Even the offline version of the uniform metric case does not seem to admit simple approximation algorithms. The best known approximation for this setting, due to Avigdor-Elgrabli et al.~\cite{corr/abs-1202-4504}, relies on intricate rounding of a linear programming relaxation in order to get a constant-factor approximation.

The hardness of the RBP appears to stem primarily from the strict buffer constraint; it is therefore natural to relax this constraint and consider bicriteria approximations. We say that an algorithm achieves an $(\alpha,\beta)$ bicriteria approximation if, given any RBP instance, it generates a solution of cost no more than $\alpha \OPT$ using a buffer of capacity $\beta k$.  Here $\OPT$ is the cost of an optimal solution with buffer capacity $k$. There are few bicriteria results known for the RBP. For the offline version of the uniform metric case, a bicriteria approximation of $\left( O(\frac{1}{\epsilon}), 2 + \epsilon \right)$ for every $\epsilon > 0$ was given by Chan et al.~\cite{corr/abs-1009-4355}.  For the online version of this restricted case, Englert et al.~\cite{englert2005reordering} developed a $(4,4)$-competitive algorithm. They further showed how to convert this bicriteria approximation into a true approximation with a logarithmic ratio. We show in Appendix~\ref{appendix:counterexample} that such a conversion from a bicriteria approximation to a true approximation is not possible at small loss in more general metrics, e.g. the evenly-spaced line metric. In more general metrics, relaxing the buffer constraint therefore gives us significant extra power in approximation.


We study bicriteria approximation for the offline version of RBP. When the underlying metric is a weighted tree we obtain a $\left(9, 4 + \frac{1}{k} \right)$ bicriteria approximation algorithm. Using tree embeddings of \cite{FRT} this implies a $\left(O(\log n), 4 + \frac{1}{k} \right)$ bicriteria approximation for arbitrary metrics over $n$ points. 




\paragraph{Other Related Work: } 
Besides the work of Englert et al.~\cite{englert2007reordering}, existing results address RBP over very specific metrics. RBP was first considered by R{\"a}cke et al.~\cite{Racke02onlinescheduling}. They focused on the uniform metric with online arrival of requests and developed an $O(\log^2 k)$-competitive algorithm. This was subsequently improved on by a number of results~\cite{englert2005reordering,avigdor2010improved,adamaszek2011almost}, leading to an $O(\sqrt{\log k})$-competitive algorithm~\cite{adamaszek2011almost}. 

With the disk arm scheduling problem in mind, Khandekar et al.~\cite{khandekar2010online} considered the online version of RBP over the evenly-spaced line metric (line graph with unit edge lengths) and gave an online algorithm with a competitive ratio of $O(\log^2 n)$. This was improved on by Gamzu et al.~\cite{gamzu2007improved} to an $O(\log n)$-competitive algorithm.

Bicriteria approximations have been studied previously in the context of resource augmentation (see~\cite{pruhs2004handbook} and references therein).  In this paradigm, the algorithm is augmented with extra resources (usually faster processors) and the benchmark is an optimal solution without augmentation. This approach has been applied to, for example, paging~\cite{sleator1985amortized}, scheduling~\cite{kalyanasundaram2000speed,bansal2003server}, and routing problems~\cite{Roughgarden02howbad}. 

 \paragraph{Techniques:} 
We can assume without loss of generality that the server is lazy and services each request when it absolutely must---to create space in the buffer for a newly received request. Then after reading in the first $k$ requests, the server must serve exactly one request for each new one received.
Intuitively, adding extra space in the buffer lets us defer serving decisions. 
In particular, while the optimal server must serve a request at every step, we serve requests in batches at regular intervals. Partitioning requests into batches appears to be more tractable than determining the exact order in which requests appear in an optimal solution. This enables us to go beyond previous approaches (see \cite{avigdor2010improved,corr/abs-1202-4504}) that try to extract the order in which requests appear in an optimal solution. 
We enforce the buffer capacity constraint by placing lower bounds on the cardinalities of the batches. In particular, by ensuring that each batch is large enough, we make sure that the server ``carries forward'' few requests. Then the maximum buffer utilization can be bounded by the number of requests carried forward plus the number read before the next batch is processed.

A crucial observation that underlies our algorithm is that when the underlying metric is a tree, we can find vertices $\{ v_i\}_i$ that any solution with buffer capacity $k$ must visit in order. This allows us to anchor the $i$th batch at $v_i$ and equate the serving cost of a batch to the cost of the subtree spanning the batch and rooted at $v_i$. Overall, when the underlying metric is a tree, the problem of finding low-cost batches with cardinality constraints reduces to finding low-cost subtrees which are rooted at $v_i$s, cover all the requests, and satisfy the same cardinality constraints. We formulate a linear programming relaxation, LP$1$, for this covering problem. 


Rounding LP$1$ directly is difficult because of the cardinality constraints. To handle this we round LP$1$ partially to formulate another relaxation that is free of the cardinality constraints and is amenable to rounding. Specifically, using a fractional optimal solution to LP$1$, we determine for each request $j$ an \emph{interval} of indices, $\Gamma(j)$, such that any solution that assigns every request to a batch within its corresponding interval approximately satisfies the buffer constraint. This allows us to remove the cardinality constraints and instead formulate an interval-assignment relaxation LP$2$. In order to get the desired bicriteria approximation we show two things: first, the optimal cost achieved by LP$2$ is within a constant factor of the optimal cost for the given RBP instance; second, an integral feasible solution of LP$2$ can be transformed into a RBP solution using a bounded amount of extra buffer space. Finally we develop a rounding algorithm for LP$2$ which achieves an approximation ratio of $2$.

\section{Notation}
An instance of RBP is specified by a metric space over a vertex set
$V$, a sequence of $n$ vertices (requests), an integer $k$, and a
starting vertex $v_0$. The metric space is represented by a graph $G =
(V, E)$ with distance function $d : E \to \mathbf{R}^+$ on edges. We
index requests by $j$. We assume without loss of generality that
requests are distinct vertices. Starting at $v_0$, the server reads
requests from the input sequence into its buffer and clears requests
from its buffer by visiting them in the graph (we say these requests
are served). The goal is to serve all requests, having at most $k$
buffered requests at any point in time, with minimum traveling
distance. We denote the optimal solution as $\OPT$. For the most part
of this paper, we focus on the special case where $G$ is a tree.

We break up the timeline into windows as follows. Without loss of
generality, $n$ is a multiple of $2k+1$, i.e. $n = (2k+1)m$. For $i
\in [m]$, we define window $W_i$ to be the set of requests from $(2k+1)(i-1)
+ 1$ to $(2k+1)i$. Let $w(j)$ be the index of the window in which $j$
belongs. The $i$-th \emph{time window} is defined to be the duration
in which the server read $W_i$.


\section{Reduction to Request Cover Problem}
In this section we show how to use extra buffer space to convert the RBP into a new and simpler problem that we call Request Cover. The key tool for the reduction is the following lemma which states that we can find for every window a vertex in the graph $G$ that must be visited by any feasible solution within the same window. We call these
vertices {\em terminals}. This allows us to break up the server's path into
segments that start and end at terminals.
\begin{lemma}
  \label{lem:terminals}
  For each $i$, there exists a vertex $v_i$ such that all feasible
  solutions with buffer capacity $k$ must visit $v_i$ in the $i$-th time window.
\end{lemma}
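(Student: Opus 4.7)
The plan is to prove the lemma in two steps. First, I would show that any feasible algorithm must serve at least $k+1$ requests from $W_i$ during the $i$-th time window. Second, I would exhibit a single vertex $v_i$ of $G$ that any walk visiting at least $k+1$ vertices of $W_i$ must traverse. The conjunction of these two statements immediately yields the lemma.

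For the first step, the buffer constraint suffices. Since $|W_i| = 2k+1$ and all requests of $W_i$ are read during the $i$-th time window, at the end of this window at most $k$ requests of $W_i$ can still reside in the buffer (the total buffer capacity is $k$). Hence at least $k+1$ requests of $W_i$ are served by the end of the window, and since a request cannot be served before it has been read, these $k+1$ requests are served \emph{during} the $i$-th time window. The server's walk in that window therefore visits at least $k+1$ distinct vertices of $W_i$.

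For the second step, I would take $v_i$ to be a centroid of $W_i$ in the tree $G$: a vertex such that every connected component of $G \setminus \{v_i\}$ contains at most $k$ vertices of $W_i$. Existence follows from the standard weighted-centroid lemma on trees applied with weight $1$ on vertices of $W_i$ and $0$ elsewhere: the total weight is $2k+1$, so a centroid guarantees each component has weight at most $\lfloor (2k+1)/2 \rfloor = k$. One can construct $v_i$ explicitly by starting at any vertex and repeatedly moving into the unique ``heavy'' component (which is unique because two components each of weight $\geq k+1$ would already account for $\geq 2k+2$ vertices of $W_i$); this process terminates at a centroid. Because $G$ is a tree, any walk that avoids $v_i$ must stay entirely within a single component of $G \setminus \{v_i\}$, and therefore touches at most $k$ vertices of $W_i$. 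The contrapositive, combined with the first step, gives the lemma.

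The main (though minor) obstacle is carefully invoking the tree structure in the second step: the argument critically relies on $G \setminus \{v_i\}$ being disconnected in such a way that two vertices on different ``sides'' of $v_i$ cannot be connected by a walk avoiding $v_i$, which would fail in a general metric. This is consistent with the lemma being stated in the tree setting, and is precisely the structural property the rest of the paper exploits by anchoring batches at the terminals $v_i$.
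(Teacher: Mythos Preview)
Your proposal is correct and follows essentially the same approach as the paper: both arguments identify $v_i$ as a weighted centroid of $W_i$ in the tree (the paper constructs it via an edge orientation that points each edge away from the ``light'' side, which is exactly your ``move into the heavy component'' procedure) and then use the buffer bound to force the server to leave any single component of $G\setminus\{v_i\}$. Your explicit count of $k+1$ served requests from $W_i$ makes the buffer step slightly more transparent than the paper's phrasing, but the underlying idea is identical.
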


\begin{proof}
  Fix a feasible solution and $i$. We orient the tree as follows. For
  each edge $e = (u, v)$, if after removing $e$ from the tree, the
  component containing $u$ 
  contains at most $k$ requests of $W_i$, then we direct the edge from
  $u$ to $v$. Since $|W_i| = 2k + 1$, there is exactly one directed
  copy of each edge.
  
  An oriented tree is acyclic so there exists a vertex $v_i$ with
  incoming edges only. We claim that the server must visit $v_i$
  during the $i$-th time window. During the $i$-th time window, the
  server reads all $2k + 1$ requests of $W_i$. Since each component of
  the induced subgraph $G[V \setminus \{v_i\}]$ contains at most $k$
  requests of $W_i$ and the server has a buffer of size $k$, it cannot
  remain in a single component for the entire time window. Therefore,
  the server must visit at least two components, passing by $v_i$, at
  some point during the $i$-th time window. 
\end{proof}

For the remainder of the argument, we will fix the terminals $v_1,
\ldots, v_m$. Note that since $G$ is a tree, there is a unique path
visiting the terminals in sequence, and every solution must contain
this path.  For each $i$, let $P_i$ denote the path from $v_{i-1}$
to $v_i$.

We can now formally define request covers.

\begin{definition}[Request cover]
  Let $\batches$ be a partition of the requests into \emph{batches}
  $\batch_1, \ldots, \batch_m$, and $\rcover$ be an ordered collection
  of $m$ edge subsets $E_1, \ldots, E_m \subseteq E$. The pair
  $(\batches, \rcover)$ is a \emph{request cover} if
  \begin{enumerate}
  \item For every request $j$, the index of the batch containing $j$
    is at least $w(j)$, i.e. the window in which $j$ is released.
  \item For all $i \in [m]$, $E_i \cup P_i$ is a connected subgraph spanning $B_i$.
  \item There exists a constant $\buffer$ such that for all $i \in
    [m]$, we have $\sum_{l \leq i} |\batch_l| \geq (2k+1)i - \buffer
    k$; we say that the request cover is \emph{$\buffer$-feasible}. We
    call the request cover \emph{feasible} if $\buffer=1$.
  \end{enumerate}
  The {\em length} of a request cover is $d(\rcover) = \sum_i d(E_i)$.
\end{definition}
%

\begin{definition}[Request Cover Problem (RCP)]
  In the RCP we are given a metric space $G=(V,E)$ with lengths $d(e)$
  on edges, a sequence of $n$ requests, buffer capacity constraint
  $k$, and a sequence of $m=n/(2k+1)$ terminals $v_1, \ldots,
  v_m$. Our goal is to find a feasible request cover of minimum length.
\end{definition}

We will now relate the request cover problem to the RBP. Let
$(\batches^*, \rcoveropt)$ denote the optimal solution to the RCP. We
show on the one hand (Lemma~\ref{lem:rcoveropt}) that this solution
has cost within a constant factor of OPT, the optimal solution to
RBP. On the other hand, we show (Lemma~\ref{lem:construct}) that any
$\buffer$-feasible solution to RCP can be converted into a solution to
the RBP that is feasible for a buffer of size $(2+\buffer)k+1$ with a
constant factor loss in length.

\begin{lemma}
  \label{lem:rcoveropt}
  $d(\OPT) \geq d(\rcoveropt)$.
\end{lemma}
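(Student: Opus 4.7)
The plan is to take the optimal RBP solution and extract from it a feasible request cover whose length is at most $d(\OPT)$; since $\rcoveropt$ is optimal, this proves the inequality. Fix the walk of OPT. For each $i \in [m]$, I would define the batch $\batch_i$ to be the set of requests that OPT \emph{serves} during the $i$-th time window (not necessarily those read in that window), and $E_i$ to be the set of distinct edges that OPT's server traverses during the $i$-th time window. Let $\batches = (\batch_1, \ldots, \batch_m)$ and $\rcover = (E_1, \ldots, E_m)$.

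Next I would verify that $(\batches, \rcover)$ is a (1-)feasible request cover. Condition 1 is immediate: a request served during the $i$-th time window must have been read earlier, so its window index satisfies $w(j) \le i$. Condition 3 with $\buffer = 1$ follows from the buffer-capacity constraint of OPT: by the end of the $i$-th time window the server has read $(2k+1)i$ requests, and since the buffer has capacity $k$ at least $(2k+1)i - k$ of them have been served, giving $\sum_{l \le i} |\batch_l| \ge (2k+1)i - k$. Since OPT serves every request, $\{\batch_i\}$ is indeed a partition of the requests.

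Condition 2 is where Lemma~\ref{lem:terminals} does the work. During the $i$-th time window the server executes a single continuous walk in the tree that (a) visits every request in $\batch_i$ by construction, and (b) passes through $v_i$ by Lemma~\ref{lem:terminals}. Consequently $E_i$ is a connected edge set containing $v_i$ and spanning $\batch_i$. Because $P_i$ has $v_i$ as an endpoint, $E_i \cup P_i$ is connected and still spans $\batch_i$.

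Finally, the length bound is essentially free: $d(E_i)$ counts each edge once while the server's walk in the $i$-th time window traverses every edge of $E_i$ at least once, so $d(E_i)$ is at most the walk length during that window. Summing over $i$ telescopes to the total length of OPT's walk, giving $d(\rcoveropt) \le \sum_i d(E_i) \le d(\OPT)$. The only real subtlety is the choice to segment the walk by \emph{time windows} (the reading schedule) rather than by \emph{terminal visits}: the time-window segmentation makes conditions 1 and 3 transparent from the buffer constraint, and Lemma~\ref{lem:terminals} then supplies exactly the extra geometric hook needed to glue $E_i$ to $P_i$ for condition 2.
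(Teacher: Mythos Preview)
Your proposal is correct and matches the paper's proof essentially line for line: define $\batch_i$ as the requests served and $E_i$ as the edges traversed during the $i$-th time window, invoke Lemma~\ref{lem:terminals} to get $v_i \in E_i$ so that $E_i \cup P_i$ is connected, and use the buffer bound for condition~3. One tiny quibble: the sum $\sum_i d(E_i) \le d(\OPT)$ is not ``telescoping'' but simply additivity over the time-window partition of the walk; otherwise nothing is missing.
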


\begin{proof}
  For each $i$, let $E_i$ be the edges traversed by the optimal server
  during the $i$-th time window and let $\rcover$ be the collection of
  edge subsets. 
  We have $d(\OPT) \geq \sum_i d(E_i) = d(\rcover)$, so it suffices to
  show that $\rcover = (E_1, \ldots, E_m)$ is a feasible request
  cover. 
  By Lemma \ref{lem:terminals}, both $E_i$ and $P_i$ are connected
  subgraphs containing $v_i$ for each $i$. Hence $\rcover$ is
  connected. Since $E_l$ contains the requests served in the $l$-th
  time window for each $l$, and for each $i$ the server has read $(2k
  + 1)i$ requests and served all except at most $k$ of them by the end
  of the $i$-th time window, we get that $\sum_{l \leq i} |\batch_l| \geq
  (2k + 1)i - k$. This proves that $\rcover$ is a feasible request
  cover. 
\end{proof}

Next, consider a request cover $(\batches, \rcover)$. We may assume
without loss of generality that for all $i$, $E_i \cap P_i =
\emptyset$. This observation implies that $E_i$ can be partitioned
into components $E_i(p)$ for each vertex $p \in P_i$, where $E_i(p)$
is the component of $E_i$ containing $p$.



We will now define a server for the RBP, $\server(\batches, \rcover)$,
based on the solution $(\batches, \rcover)$. Recall that the server
has to start at $v_0$. In the $i$-th iteration, it first buffers all
requests in window $W_i$. Then it moves from $v_{i-1}$ to $v_i$ and
serves requests of $\batch_i$ as it passes by them.

\begin{algorithm}
\caption{$\server(\batches, \rcover)$}
  \begin{algorithmic}[1]
   \label{alg:server}
   \STATE Start at $v_0$
   \FOR {$i = 1$ \TO $m$}
   \STATE (Buffering phase) Read $W_i$ into buffer
   \STATE (Serving phase) Move from $v_{i-1}$ to $v_i$ along $P_i,$
   and for each vertex $p \in P_i$, perform an Eulerian tour of 
   $E_i(p)$. Serve requests of $\batch_i$ along the way. 
   \ENDFOR
  \end{algorithmic}
\end{algorithm}

\begin{lemma}
  \label{lem:construct}
  Given a $\buffer$-feasible request cover $(\batches, \rcover)$,
  $\server(\batches,\rcover)$ is a feasible solution to the RBP
  instance with a buffer of size $(2+\buffer)k + 1$, and has length at
  most $d(\OPT) + 2d(\rcover)$.
\end{lemma}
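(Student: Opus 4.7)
The plan is to decompose the lemma into three verifications: correctness of the serving order, a buffer-capacity bound, and a length calculation.

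First, I would argue that the algorithm produces a valid schedule, i.e.\ each request is in the buffer when it is served. By the first property of a request cover, any request $j$ is assigned to a batch $\batch_l$ with $l \geq w(j)$. Hence when iteration $i$ begins its serving phase, every request of $\batch_i$ lies in a window $W_l$ with $l \leq i$ and has therefore already been read into the buffer during an earlier (or the current) buffering phase. So the serving phase only removes requests that are genuinely present.

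Second, I would track the buffer occupancy over time. At the end of iteration $i$ the server has read exactly $(2k+1)i$ requests and served $\sum_{l \leq i}|\batch_l|$ of them; by the $\buffer$-feasibility condition the latter is at least $(2k+1)i - \buffer k$, so at most $\buffer k$ requests remain buffered between iterations. The peak load inside iteration $i+1$ occurs right after its buffering phase, when $2k+1$ fresh requests of $W_{i+1}$ have been added on top of the leftovers; this gives at most $\buffer k + (2k+1) = (\buffer+2)k + 1$ buffered requests, matching the claimed capacity.

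Third, I would compute the traveled distance. The serving phase of iteration $i$ walks along $P_i$ once and, at each vertex $p \in P_i$, performs an Eulerian tour of the component $E_i(p)$; by the standard doubling argument an Eulerian tour of a connected edge set has length at most $2$ times its total edge weight. Summing over $p \in P_i$ and over $i$ yields total Eulerian-tour cost $2\sum_i d(E_i) = 2d(\rcover)$, while the path portions contribute exactly $\sum_i d(P_i)$. Finally, by Lemma~\ref{lem:terminals} the optimal server visits $v_0, v_1, \ldots, v_m$ in order, so since the metric is a tree its walk must traverse every edge of each $P_i$, giving $\sum_i d(P_i) \leq d(\OPT)$. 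Adding the two pieces yields the desired bound $d(\OPT) + 2d(\rcover)$.

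I do not expect any serious obstacle: the main subtlety is simply being careful about when the peak buffer occupancy occurs (just after a buffering phase rather than between iterations) and about the fact that $\sum_i d(P_i) \leq d(\OPT)$ uses the tree structure together with Lemma~\ref{lem:terminals}, not just the lower bound $d(\OPT) \geq d(\rcoveropt)$.
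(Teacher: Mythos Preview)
Your proposal is correct and follows essentially the same approach as the paper: first verify that each request of $\batch_i$ is in the buffer by iteration~$i$, then bound the buffer peak as $\buffer k + (2k+1)$ right after a buffering phase, and finally compute the length as $\sum_i d(P_i) + 2\sum_i d(E_i) \le d(\OPT) + 2d(\rcover)$. The only difference is that you spell out explicitly why $\sum_i d(P_i) \le d(\OPT)$ (using Lemma~\ref{lem:terminals} and the tree structure), whereas the paper relies on this having been noted in the surrounding text.
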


\begin{proof}
  We analyze the length first. In iteration $i$, the server uses each
  edge of $P_i$ exactly once. Since $E_i$ is a disjoint union of
  $E_i(p)$ for $p \in P_i$, the server uses each edge of $E_i$ twice
  during the Eulerian tours of $E_i$'s components. The total length is therefore
  \[\sum_i d(P_i) + \sum_i 2d(E_i) \leq d(\OPT) + 2d(\rcover).\]

  Next, we show that the server has at most $(2+\buffer)k+1$ requests
  in its buffer at any point in time. We claim that all of $\batch_i$
  is served by the end of the $i$-th iteration. Consider a request $j$
  that belongs to a batch $\batch_i$. Since $i$ is at least as large
  as $w(j)$, the request has already been received by the $i$th
  phase. The server visits $j$'s location during the $i$th iteration
  and therefore services the request at that time if not earlier. This
  proves the claim.

  The claim implies that the server begins the $(i+1)$-th iteration
  having read $(2k+1)i$ requests and served $\sum_{l \leq i}
  |\batch_l| \geq (2k+1)i - \buffer k$ requests, that is, with at most
  $\buffer k$ requests in its buffer. It adds $2k+1$ requests to be
  the buffer during this iteration. So it uses at most $(2+\buffer)k +
  1$ buffer space at all times. 
\end{proof}

\section{Approximating the Request Cover Problem}
We will now show how to approximate the request cover problem. Our
approach is to start with an LP relaxation of the problem, and use the
optimal fractional solution to the LP to further define a simpler
covering problem which we then approximate in Section~\ref{sec:interval}.

\subsection{The request cover LP and the interval cover problem}


The integer linear program formulation of RCP is as follows. To obtain
an LP relaxation we relax the last two constraints to $x(i,j),
y(e,i)\in [0,1]$.
\begin{equation}
\label{lp:rcp}
\tag{LP1}
\boxed{
\begin{aligned}
  \mbox{minimize}\quad            
  & \sum_i \sum_e y(e,i)d_e\\
  \mbox{subject to}\quad  & \sum_{w(j) \leq i} x(j, i) \geq 1 &\quad  \forall j\\
                      & \sum_{j: w(j) \leq i} \sum_{i' \leq i} x(j, i') \geq (2k+1)i - k &\quad \forall i\\
                      & y(e,i) \geq x(j,i) &\quad \forall i, j, e \in
                      R_{ji}\\
                      & x(j, i) \in \{0,1\} &\quad \forall i, j\\
                     & y(e, i) \in \{0,1\} &\quad \forall i,e
\end{aligned} 
}
\end{equation}
Here the variable $x(j,i)$ indicates whether request $j$ is assigned
to batch $\batch_i$ and the variable $y(e,i)$ indicates whether edge
$e$ is in $E_i$. Recall that the edge set $E_i$ along with path $P_i$
should span $B_i$. Let $R_{ji}$ denote the (unique) path in $G$ from
$j$ to $P_i$. The third inequality above captures the constraint that
if $j$ is assigned to $B_i$ and $e\in R_{ji}$, then $e$ must belong to
$E_i$.



Let $(x^*, y^*)$ be the fractional optimal solution to the linear
relaxation of \eqref{lp:rcp}.  
Instead of rounding $(x^*, y^*)$ directly to get a feasible request
cover, we will show that it is sufficient to find request covers that
``mimic'' the fractional assignment $x^*$ but do not necessarily
satisfy the cardinality constraints on the batches (i.e. the second
set of inequalities in the LP). To this end we define an {\em interval
request cover} below.
\begin{definition}[Interval request cover]
  For each request $j$, we define the \emph{service deadline} $h(j) =
  \min\{i \geq w(j) : \sum_{l \leq i} x^*(j,l) \geq 1/2\}$ and the
  \emph{service interval} $\interval(j) = [w(j), h(j)]$. A request
  cover $(\batches, \rcover)$ is an \emph{interval request cover}
  if it assigns every request to a batch within its service intervals.
\end{definition}


In other words, while $x^*$ ``half-assigns'' each request no later
than its service deadline, an interval request cover mimics $x^*$ by
integrally assigning each request no later than its service
deadline. The following is a linear programming formulation for the
problem of finding minimum length interval request covers.
\begin{equation}
 \label{lp:interval}
 \tag{LP2}
\boxed{
\begin{aligned}
  \mbox{minimize}\quad            
  & \sum_i \sum_e y(e,i)d_e\\
  \mbox{subject to}\quad  & \sum_{i \in \interval(j)} x(j, i) \geq 1 &\quad  \forall j\\
                      & y(e,i) \geq x(j,i) &\quad \forall j, i \in \interval(j), e \in R_{ji}\\
                      & x(j, i), y(e,i) \in [0,1] &\quad \forall i, j,
                      e
\end{aligned}
}
\end{equation}

Let $(\xintopt, \yintopt)$ be the fractional optimal of
\eqref{lp:interval}. We now show that interval request covers are
$2$-feasible request covers and that $d(\yintopt) \leq 2d(y^*)$. Since
$d(y^*) \leq d(\rcoveropt)$, it would then suffice to round
\eqref{lp:interval}. 



\begin{lemma}
  \label{lem:2-strict}
  Interval request covers are $2$-feasible.
\end{lemma}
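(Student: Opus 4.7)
The plan is to fix an index $i \in [m]$ and argue that $\sum_{l \leq i} |\batch_l| \geq (2k+1)i - 2k$. Every request $j$ with $h(j) \leq i$ has its service interval $\interval(j) = [w(j), h(j)]$ contained entirely in $[1, i]$, so in any interval request cover such a $j$ must be assigned to some batch $\batch_l$ with $l \leq i$. Setting $A = \{j : w(j) \leq i\}$ and $A_1 = \{j \in A : h(j) \leq i\}$, this immediately gives $\sum_{l \leq i} |\batch_l| \geq |A_1|$. Since each window contains exactly $2k+1$ requests, $|A| = (2k+1)i$, so it suffices to prove $|A \setminus A_1| \leq 2k$.

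For each $j \in A_2 := A \setminus A_1$, the minimality built into the definition $h(j) = \min\{l \geq w(j) : \sum_{l' \leq l} x^*(j,l') \geq 1/2\}$ gives $\sum_{l \leq i} x^*(j,l) < 1/2$. Assuming the optimal LP solution is normalized so that $\sum_l x^*(j,l) \leq 1$ for every $j$ (and hence $\sum_{l \leq i} x^*(j,l) \leq 1$ for $j \in A_1$), I would plug this into the second constraint of \eqref{lp:rcp},
\[\sum_{j \in A} \sum_{l \leq i} x^*(j,l) \geq (2k+1)i - k,\]
splitting the sum across $A_1$ (contributing at most $|A_1|$) and $A_2$ (contributing less than $|A_2|/2$). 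Substituting $|A_2| = (2k+1)i - |A_1|$ and rearranging yields $|A_1| \geq (2k+1)i - 2k$, which is exactly what is needed.

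The main subtlety is justifying the normalization $\sum_l x^*(j,l) \leq 1$, since the LP only enforces the lower bound $\sum_l x^*(j,l) \geq 1$ and a fractional optimum could in principle place more than a unit of mass on a single request. I would handle this either by passing to a canonical LP optimum in which the first constraint is tight for every $j$, or by replacing $x^*$ with the truncated surrogate $\hat{x}(j,l) = \min\bigl(x^*(j,l),\ \max(0,\ 1 - \sum_{l' < l} x^*(j,l'))\bigr)$ obtained by capping the cumulative assignment of $j$ at $1$ from the left. This surrogate satisfies $\sum_l \hat{x}(j,l) = 1$ and preserves each deadline $h(j)$, because the cumulative sums $\sum_{l' \leq l} \hat{x}(j,l') = \min(\sum_{l' \leq l} x^*(j,l'),\, 1)$ agree with those of $x^*$ whenever they lie below $1/2$. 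Checking that $\hat{x}$ still carries enough mass into $[1,i]$ to invoke the LP cardinality bound with the same additive slack is the one step that requires a little care, but it is straightforward once one tracks the effect of the truncation.
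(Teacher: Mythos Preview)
Your argument is essentially the paper's own proof: define $H_i=\{j:h(j)\le i\}$ (your $A_1$), use the second constraint of \eqref{lp:rcp} together with the bound $\sum_{i'\le i}x^*(j,i')<1/2$ for $j\notin H_i$ and $\le 1$ for $j\in H_i$, and rearrange. You are in fact more careful than the paper, which simply writes $\sum_{i'\le i}x^*(j,i')\le 1$ without comment.

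One caution about your second fix. The truncated surrogate $\hat x$ does preserve each $h(j)$ and satisfies $\sum_l\hat x(j,l)=1$, but it need \emph{not} satisfy the cardinality constraint of \eqref{lp:rcp}. Since $\sum_{l\le i}\hat x(j,l)=\min\bigl(\sum_{l\le i}x^*(j,l),1\bigr)$, capping can remove exactly the excess mass that was making the constraint hold. Concretely, take $k=1$, $m=3$, and an $x^*$ with $x^*(1,1)=x^*(1,2)=1$, $x^*(2,1)=1$, $x^*(3,3)=1$, $x^*(4,2)=x^*(5,2)=1$, $x^*(6,3)=1$, all others zero; this is feasible for \eqref{lp:rcp} (the constraint at $i=2$ reads $5\ge 5$), but after truncation the same constraint reads $4\ge 5$. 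So you cannot ``invoke the LP cardinality bound'' for $\hat x$, and the step you flagged as needing a little care actually fails.

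Your first fix---passing to an optimum of \eqref{lp:rcp} in which the assignment constraint $\sum_i x(j,i)\ge 1$ is tight for every $j$---is the clean way out. The simplest justification is to add the constraints $\sum_i x(j,i)\le 1$ to \eqref{lp:rcp}; the resulting LP is still a relaxation of the integral request cover problem (where each request is assigned to exactly one batch), so its optimum is still at most $d(\rcoveropt)$, and now the bound $\sum_{i'\le i}x^*(j,i')\le 1$ is immediate. Everything downstream (the definition of $h(j)$, Lemma~\ref{lem:fract-interval-LP}) goes through unchanged with this tightened relaxation.
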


\begin{proof}
  Fix $i$. Let $H_i := \{j : h(j) \leq i\}$ denote the set of all
  requests whose service intervals end at or before the $i$th time
  window. We first claim that $|H_i| \geq (2k+1)i - 2k$. In
  particular, the second constraint of \eqref{lp:rcp} and the
  definition of $H_i$ gives us
  \begin{align*}
    (2k+1)i - k & = \sum_{j : w(j) \leq i} \sum_{i' \leq i } x^*(j, i') 
    = \sum_{j \in H_i : w(j) \leq i}\sum_{i' \leq i} x^*(j, i') + \sum_{j \notin
      H_i : w(j) \leq i} \sum_{i' \leq i} x^*(j, i')\\
    &\leq \sum_{j \in H_i : w(j) \leq i} 1 + \sum_{j \notin H_i : w(j) \leq i} \frac{1}{2}
    = |H_i| + \frac{1}{2}\left((2k+1)i - |H_i|\right).
  \end{align*}
  The claim now follows from rearranging the above
  inequality.

  Note that in an interval request cover, each request in $H_i$ is
  assigned to some batch $B_l$ with $l \leq i$. Therefore,
  \[\sum_{l \leq i} |\batch_l| \geq |H_i| \geq
  (2k+1)i - 2k.\] 
  
\end{proof}

We observe that multiplying all the coordinates of $x^*$ and $y^*$ by
$2$ gives us a feasible solution to \eqref{lp:interval}. Thus we have
the following lemma.
\begin{lemma}
  \label{lem:fract-interval-LP}
  We have $d(\yintopt) \leq 2d(y^*)$.
\end{lemma}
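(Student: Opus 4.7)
The plan is to exhibit an explicit feasible solution to \eqref{lp:interval} of objective value at most $2 d(y^*)$; then $d(\yintopt) \leq 2 d(y^*)$ follows by optimality of $\yintopt$. The natural candidate is simply $(2 x^*, 2 y^*)$, whose objective equals exactly $2 d(y^*)$, so the entire task reduces to verifying feasibility.

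The key step is the covering constraint $\sum_{i \in \interval(j)} x(j, i) \geq 1$. I would invoke the defining property of the service deadline: $h(j)$ is the smallest index $i \geq w(j)$ with $\sum_{l \leq i} x^*(j, l) \geq 1/2$. Adopting the natural convention that $x^*(j, l) = 0$ for $l < w(j)$ (a request cannot be fractionally assigned to a batch released before it), this rewrites as $\sum_{i \in \interval(j)} x^*(j, i) \geq 1/2$, and doubling yields $\sum_{i \in \interval(j)} 2 x^*(j, i) \geq 1$ as required. The second LP2 constraint $y(e,i) \geq x(j,i)$ for $j, i \in \interval(j), e \in R_{ji}$ is inherited verbatim from the corresponding constraint of \eqref{lp:rcp}, and is preserved under uniform scaling by the same factor on both sides.

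The one mild subtlety is the $[0,1]$ upper bound in \eqref{lp:interval}, which $(2 x^*, 2 y^*)$ can violate when some coordinate of $(x^*, y^*)$ exceeds $1/2$. I would dispatch this in either of two equivalent ways: observe that these upper bounds are not binding at the LP2 optimum (the remaining constraints are $\geq$-inequalities with nonnegative objective coefficients, so dropping the upper bounds leaves the optimal value unchanged); or cap each coordinate at $1$ and split into two cases. If some $2 x^*(j, i)$ with $i \in \interval(j)$ already reaches $1$, the cap contributes a full $1$ to the covering sum on its own. Otherwise the cap is inactive on the entire interval and the doubled sum of at least $1$ is preserved. The $y \geq x$ constraint survives under the monotone map $z \mapsto \min(2z, 1)$, and the objective only decreases. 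This completes the verification; there is no real obstacle, since the whole proof is essentially a one-line consequence of the $1/2$-threshold built into the definition of $h(j)$.
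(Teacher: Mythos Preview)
Your proof is correct and is exactly the approach the paper takes: the paper's entire argument is the one-line observation, stated just before the lemma, that multiplying $(x^*,y^*)$ by $2$ yields a feasible solution to \eqref{lp:interval}. You have merely spelled out the verification in more detail, and in fact you go beyond the paper by explicitly handling the $[0,1]$ box constraint, which the paper silently ignores.
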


Note that the lemma says nothing about the integral optimal of
\eqref{lp:interval} so a solution that merely approximates the
optimal integral interval request cover may not give a good
approximation to the RBP, and we need to bound the integrality gap of
the LP.
In the following subsection, we show that we can find an interval
request cover of length at most $2d(\yintopt)$.




\subsection{Approximating the Interval Assignment LP}
\label{sec:interval}
Before we describe the general approximation , we consider two special
cases for insight.

\paragraph{Example: single edge.}
Suppose the tree consists of a single unit-length edge $e = (u,v)$,
all requests reside at $u$, and all terminals at $v$. In this case, $R_{ji} = \{e\}$ for all pairs $j$ and $i$ so
the second set of constraints in \eqref{lp:interval} is simply
\[y(i) \geq x(j,i) \quad \forall j, i \in \interval(j)\] where we
write $y(i)$ for $y(e,i)$. A minimum solution satisfies these
constraints with equality. Summing over $i \in \interval(j)$, we get
that in this case \eqref{lp:interval} is equivalent to
\begin{equation*}
\boxed{
\begin{aligned}
  \mbox{minimize}\quad            
  & \sum_i y(i)\\
  \mbox{subject to}\quad  & \sum_{i \in \interval(j)} y(i) \geq 1 &\quad  \forall j
\end{aligned}
}
\end{equation*}

This is exactly the linear relaxation for the hitting set\footnote{A
  subset $X$ of a universe $U$ is a \emph{hitting set} for
  $\mathcal{S} \subset 2^U$ if $X \cap S \neq \emptyset$ for all $S
  \in \mathcal{S}$.} problem where the sets we want to hit are
intervals. While the general hitting set problem is hard, it turns out
that this special case can be solved exactly in polynomial time and
the relaxation has no integrality gap\footnote{One way to see this is
  that the columns of the constraint matrix has consecutive ones, and
  thus the constraint matrix is totally unimodular.}. Thus, we get an
optimal solution via a reduction to the minimum interval hitting set
problem: compute a minimum hitting set $M$ for the set of intervals
$\intervals := \{\interval(j)\}$, and then add $e$ to $E_i$ for $i \in
M$.

\paragraph{Example: two edges.}
Suppose the tree is a line graph consisting of three vertices $u_1$,
$u_2$ and $v$ with unit-length edges $e_1 = (u_1, v)$ and $e_2 = (u_2,
u_1)$ (Figure~\ref{fig:examples}(a)). Requests reside at $u_1$ and $u_2$, and all terminals at
$v$. 
For each $i$ and $j$ residing at $u_1$, we have $R_{ji} =
\{e_1\}$. For each $i$ and $j$ residing at $u_2$ we have $R_{ji} =
\{e_1, e_2\}$. 
Thus feasible solutions to
\eqref{lp:interval} satisfy the constraints
\begin{align*}
  \sum_{i \in \interval(j)} y(e_1, i) \geq 1 &\quad  \forall j,\\
  \sum_{i \in \interval(j)} y(e_2, i) \geq 1 &\quad \forall j \in u_2.
\end{align*}
The constraints suggest that the vector $y(e_1, \cdot)$ is a
fractional hitting set for the collection of intervals
$\intervals(e_1) := \{\interval(j)\}$, and $y(e_2, \cdot)$ for
$\intervals(e_2) := \{\interval(j) : j \in u_2\}$.
In light of the single-edge special case, a naive approach is to first
compute minimum hitting sets $M(e_1)$ and $M(e_2)$ for
$\intervals(e_1)$ and $\intervals(e_2)$, respectively. Then we add
$e_1$ to $E_i$ for $i \in M(e_1)$, and $e_2$ to $E_i$ for $i \in
M(e_2)$. However, the resulting edge sets may not be connected.
Instead, we make use of the following crucial facts:
\begin{enumerate}
\item[(1)] We should include $e_2$ in $E_i$ only if $e_1 \in E_i$, and,
\item[(2)] Minimal hitting sets are at most twice minimum fractional
  hitting sets (see Lemma \ref{lem:disjoint-intervals}).
\end{enumerate}
These facts suggest that we should first compute a minimal hitting set
$M(e_1)$ for $\intervals(e_1)$ and then compute a minimal hitting set
$M(e_2)$ for $\intervals(e_2)$ with the constraint that $M(e_2)
\subseteq M(e_1)$. This is a valid solution to \eqref{lp:interval}
since $\intervals(e_2) \subseteq \intervals(e_1)$. We proceed as usual
to compute $\rcover$. The resulting $\rcover$ is connected by (1) and
$d(\rcover) \leq 2d(\yintopt)$ by (2).

\begin{figure}
\begin{center}
\begin{tabular}{ccc}
\includegraphics[height=2.0 in]{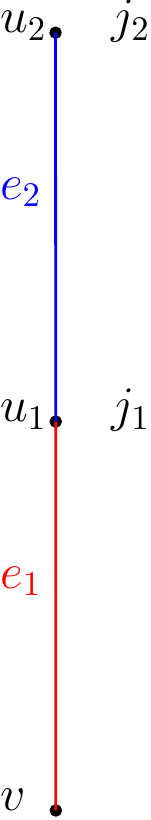}
& \hspace{1 in}  & 
\includegraphics[height=2.0 in]{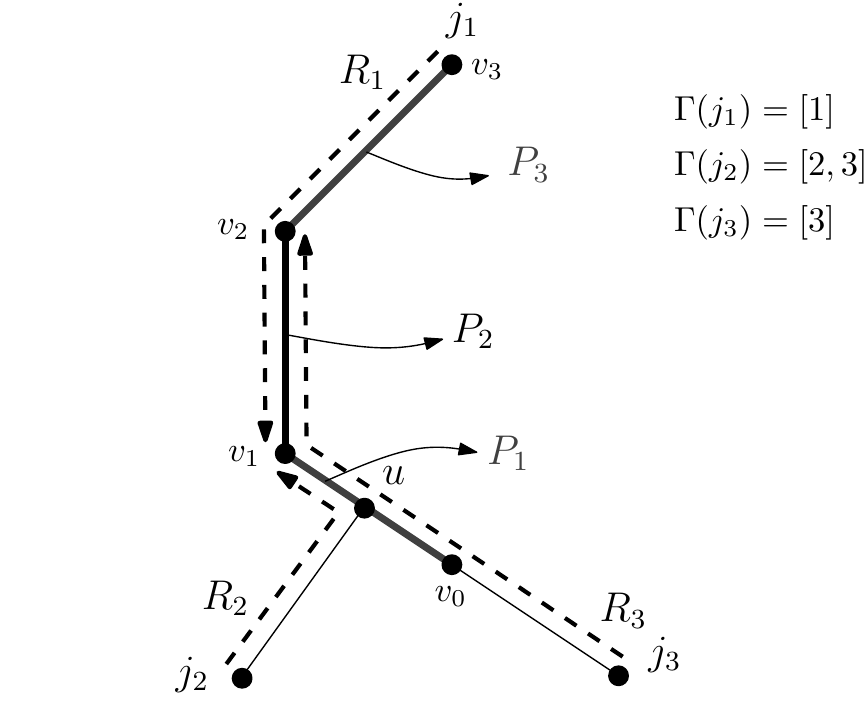}\\
(a) & \hspace{1 in}  &   (b)\\
\end{tabular}
\end{center}
\caption{(a) The two-edge example; (b) $R_i$ is the path from $j_i$ to $P_i$, for $i \in \{1,2,3 \}$. Note that arc $(v_1, v_2)$ precedes $(u,v_1)$ and no arc precedes $(v_1,v_2)$.}   
\label{fig:examples}
\end{figure}

\paragraph{General case.} 

Motivated by the two-edge example, at a high level, our approach for
the general case is as follows:
\begin{enumerate}
\item We construct interval hitting set instances over each edge.
\item We solve these instances starting from the edges nearest to the
  paths $P_i$ first.
\item We iteratively ``extend'' solutions for the instances nearer the
  paths to get minimal hitting sets for the instances further from
  the paths.
\end{enumerate}
We then use Lemma \ref{lem:disjoint-intervals} to argue a $2$-approximation
on an edge-by-edge basis.

Figure~\ref{fig:examples}(b) gives an example of an instance of
interval request cover. Note that whether an edge is closer to some
$P_i$ along a path $R_{ji}$ for some $j$ depends on which direction we
are considering the edge in. We therefore modify \eqref{lp:interval}
to include directionality of edges, replacing each edge $e$ with
bidirected arcs and directing the paths $R_{ji}$ from $j$ to $P_i$.


\begin{equation}
 \label{lp:d-interval}
 \tag{LP2$'$}
\boxed{
\begin{aligned}
  \mbox{minimize}\quad            
  & \sum_i \sum_a y(a,i)d_a\\
  \mbox{subject to}\quad  & \sum_{i \in \interval(j)} x(j, i) \geq 1 &\quad  \forall j\\
                      & y(a,i) \geq x(j,i) &\quad \forall j, i \in
                      \interval(j), a \in R_{ji}\\
                      & x(j, i), y(a,i) \in [0,1] &\quad \forall i, j,
                      a
\end{aligned}
}
\end{equation}
For every edge $e$ and window $i$, there is a single orientation of
edge $e$ that belongs to $R_{ji}$ for some $j$. So there is a 1-1
correspondence between the variables $y(e,i)$ in \eqref{lp:interval}
and the variables $y(a,i)$ in \eqref{lp:d-interval}, and the two LPs are
equivalent. Henceforth we focus on \eqref{lp:d-interval}.


Before presenting our final approximation we need some more notation.

\begin{definition}
  For each request $j$, we define $\dpath_j$ to be the directed path from
  $j$ to $\bigcup_{i \in \interval(j)} P_i$. For each arc $a$, we
  define $C(a) = \{j : a \in \dpath_j\}$
  and the set of intervals $\intervals(a) = \{\interval(j) : j \in
  C(a)\}$
  . We
  say that $a$ is a \emph{cut arc} if $C(a) \neq \emptyset$.

  We say that an arc $a$ \emph{precedes} arc $a'$, written $a \prec
  a'$, if there exists a directed path in the tree containing both the
  arcs and $a$ appears after $a'$ in the path.
\end{definition}

\begin{lemma}
  \label{lem:fractional-feasible}
  Feasible solutions $(x,y)$ of \eqref{lp:d-interval} satisfy the
  following set of constraints for all arcs $a$:
  \begin{equation*}
      \sum_{i \in \interval(j)} y(a,i) \geq 1 \quad \forall j \in
      C(a).
  \end{equation*}
\end{lemma}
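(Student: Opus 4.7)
The goal is to derive each of these "per-arc" interval constraints from the two sets of constraints that \eqref{lp:d-interval} actually imposes: the covering constraint $\sum_{i \in \interval(j)} x(j,i) \geq 1$ and the path constraints $y(a,i) \geq x(j,i)$ for $a \in R_{ji}$. Fix an arc $a$ and a request $j \in C(a)$. If I can show that $a \in R_{ji}$ for \emph{every} $i \in \interval(j)$, then summing the path constraint $y(a,i) \geq x(j,i)$ over $i \in \interval(j)$ and applying the covering constraint immediately yields $\sum_{i \in \interval(j)} y(a,i) \geq \sum_{i \in \interval(j)} x(j,i) \geq 1$. So the whole proof reduces to a single geometric claim about paths in the tree.

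The claim is that $\dpath_j$ is a common prefix of the directed paths $R_{ji}$ for all $i \in \interval(j)$. The first ingredient is that the union $U_j := \bigcup_{i \in \interval(j)} P_i$ is connected: consecutive paths $P_i$ and $P_{i+1}$ share the terminal $v_i$, and $\interval(j) = [w(j), h(j)]$ is itself an interval of window indices, so $U_j$ is a subtree of $G$. Because $G$ is a tree, there is a unique vertex $q_j \in U_j$ closest to $j$, and $\dpath_j$ is exactly the $j$-to-$q_j$ path. The second ingredient is that for any $i \in \interval(j)$, the path $R_{ji}$ from $j$ to $P_i \subseteq U_j$ must first follow $\dpath_j$ to enter $U_j$ (at the unique entry point $q_j$) and then travel inside $U_j$ to $P_i$; this is forced by the uniqueness of tree paths. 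Hence $\dpath_j \subseteq R_{ji}$ as an ordered directed path, so each arc of $\dpath_j$—in particular $a$—appears in $R_{ji}$ with the matching orientation (from $j$ toward $P_i$).

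With the geometric claim in hand, for every $i \in \interval(j)$ the second constraint of \eqref{lp:d-interval} gives $y(a,i) \geq x(j,i)$. Summing over $i \in \interval(j)$ and using the first constraint yields $\sum_{i \in \interval(j)} y(a,i) \geq \sum_{i \in \interval(j)} x(j,i) \geq 1$, which is the claimed inequality.

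The only nontrivial step is the geometric containment $\dpath_j \subseteq R_{ji}$; this is where care is needed. The main potential pitfall is that $U_j$ might not be connected, in which case "the" path from $j$ to $U_j$ would not be well-defined and the argument would collapse. Connectedness is saved precisely by the two facts that $\interval(j)$ is contiguous in $i$ and that consecutive terminal-paths $P_i, P_{i+1}$ share an endpoint, so this obstacle is more conceptual than computational.
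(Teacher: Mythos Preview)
Your proof is correct and follows essentially the same approach as the paper's: fix $a$ and $j\in C(a)$, argue that $a\in R_{ji}$ for every $i\in\interval(j)$ because $\dpath_j$ is the path from $j$ to the connected subgraph $\bigcup_{i\in\interval(j)}P_i$ (hence a prefix of each $R_{ji}$), then sum the constraint $y(a,i)\ge x(j,i)$ over $i\in\interval(j)$ and apply the covering constraint. The paper's proof is simply terser, asserting the geometric containment in one clause where you spell out the connectedness of $U_j$ and the prefix argument explicitly.
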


\begin{proof}
  Let $(x,y)$ be a feasible solution of \eqref{lp:d-interval}. Fix an
  arc $a$ and 
  $j \in C(a)$. For each $i \in \interval(j)$, we have $a \in R_{ji}$
  since $R_j$ is a path from $j$ to a connected subgraph containing
  $P_i$. By feasibility, we have $y(a,i) \geq x(j,i)$. Summing over
  $\interval(j)$, we get $\sum_{i \in \interval(j)} y(a,i) \geq
  \sum_{i \in \interval(j)} x(j,i) \geq 1$ where the last inequality
  follows from feasibility. 
\end{proof}



We are now ready to describe the algorithm. At a high level, Algorithm
\ref{alg:greedy-ext} does the following: initially, it finds a cut arc
$a$ with no cut arc preceding it and computes a minimal hitting set
$M(a)$ for $\intervals(a)$; iteratively, it finds a cut arc $a$ whose
preceding cut arcs have been processed previously, and minimally
``extends'' the hitting sets $M(a')$ computed previously for the
preceding arcs $a'$ to form a minimal hitting set $M(a)$.


\begin{algorithm}
\caption{Greedy extension}
  \begin{algorithmic}[1]
   \label{alg:greedy-ext}
    \STATE $U \gets \{a : C(a) \neq \emptyset\}$ 
    \STATE $A_i \gets \emptyset$ for all $i$
    \STATE $M(a) \gets \emptyset$ for all arcs $a$
    \WHILE {$U \neq \emptyset$}
    \STATE Let $a$ be any arc in $U$
    \WHILE {there exists $a' \prec a$ in $U$}
    \STATE $a \gets a'$
    \ENDWHILE
    \STATE Let $a = (u,v)$
    \STATE $F(a) \gets \{i : v \in P_i\} \cup
    \bigcup_{w: (v,w) \prec a} M((v,w))$
    \STATE Set $M(a) \subseteq F(a)$ to be a minimal hitting set
    for the intervals $\intervals(a)$
    \STATE $A_i \gets A_i \cup \{a\}$ for all $i \in M(a)$
    \STATE $U \gets U \setminus \{a\}$
    \ENDWHILE
    \STATE $\assign(j) \gets \min\{i \in \interval(j) : \text{$j$
      incident to $A_i$ or $P_i$}\}$ for all $j$
    \STATE $B_i \gets \{j : \assign(j) = i\}$ for all $i$
    \RETURN $\alg = (A_1, \ldots, A_m)$, $\batches = (B_1, \ldots, B_m)$
  \end{algorithmic}
\end{algorithm}

We prove that Algorithm \ref{alg:greedy-ext} actually manages to
process all cut arcs $a$ and that $F(a)$ is a hitting set for
$\intervals(a)$. First, we make the following observation.

\begin{lemma}
  \label{lem:invariants}
  For each iteration, the following holds.
  \begin{enumerate}
  \item If $U \neq \emptyset$, the inner `while' loop finds an arc.
  \item $F(a)$ is
    a hitting set for the intervals $\intervals(a)$.
  \end{enumerate}
\end{lemma}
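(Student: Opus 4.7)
I would prove the two parts separately. Part (1) is a termination statement about the inner loop, which reduces to showing that the precedence relation $\prec$ is a strict partial order on the (finite) set of arcs. Part (2) is the substantive claim and I would prove it by induction on the order in which arcs are processed by the outer loop, using the fact that the inner loop forces every preceding cut arc of $a$ to have been handled before $a$.

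\textbf{Part (1): termination.} I would argue that $\prec$ is irreflexive (a cannot strictly appear after itself on any directed path), antisymmetric (in a tree, the relative order of two arcs on a directed path is determined once both lie on such a path), and transitive (concatenating directed subpaths in a tree yields directed paths). Since the inner loop starts at some $a \in U$ and at each step replaces $a$ by a strictly $\prec$-smaller element of the finite set $U$, it must terminate in finitely many steps, producing an arc $a \in U$ with no $a' \prec a$ in $U$.

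\textbf{Part (2): $F(a)$ hits every interval in $\intervals(a)$.} I would set up an induction on the iteration number of the outer `while' loop. Assume that for every cut arc $a'$ already processed, $M(a')$ is a hitting set for $\intervals(a')$ (this inductive hypothesis is really what lets $F(a)$ work, since $F(a)$ is built from first-coordinate indices on $P_i$ together with the previously computed $M((v,w))$'s). Fix $a = (u,v)$ being processed and take an arbitrary $j \in C(a)$, so $a$ lies on the directed path $\dpath_j$ from $j$ to $\bigcup_{i \in \interval(j)} P_i$. Split into two cases based on what $\dpath_j$ does after traversing $a$:
\begin{itemize}
\item If $\dpath_j$ ends at $v$, then by definition of $\dpath_j$ there exists $i \in \interval(j)$ with $v \in P_i$. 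Such an $i$ lies in $F(a)$ by the first term $\{i : v \in P_i\}$ of its definition, so $F(a) \cap \interval(j) \neq \emptyset$.
\item Otherwise $\dpath_j$ continues past $v$ along some arc $(v,w)$. Then $(v,w) \in \dpath_j$, which gives both $j \in C((v,w))$ (so $(v,w)$ is a cut arc and $\interval(j) \in \intervals((v,w))$) and $(v,w) \prec a$ (they lie on the common directed path $\dpath_j$ with $(v,w)$ after $a$). The inner loop guarantees that when $a$ is processed, no $\prec$-predecessor of $a$ remains in $U$, so $(v,w)$ has already been handled, and the inductive hypothesis gives that $M((v,w))$ hits $\interval(j)$. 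Since $M((v,w)) \subseteq F(a)$, we conclude $F(a) \cap \interval(j) \neq \emptyset$.
\end{itemize}

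\textbf{Main obstacle.} The only delicate point is the case analysis in Part (2): I need to know that whenever $\dpath_j$ continues past $v$, the next arc $(v,w)$ has actually been processed before $a$, not merely that it is $\prec$-smaller. This relies on combining two facts: $(v,w)$ is a cut arc because $j \in C((v,w))$, so $(v,w)$ was in the initial $U$; and the inner loop's exit condition ensures all $\prec$-predecessors of $a$ remaining in $U$ are gone, hence were processed earlier. Once this is in place, the induction step is immediate and Part (1) is almost automatic.
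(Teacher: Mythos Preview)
Your proposal is correct and follows essentially the same approach as the paper: Part (1) via acyclicity of $\prec$ on the bidirected tree, and Part (2) by induction on iterations with the same two-case split according to whether $a$ is the last arc of $\dpath_j$ or is followed by some $(v,w) \prec a$ that has already been processed. Your inductive hypothesis (that $M(a')$ hits $\intervals(a')$ for processed arcs) is the natural strengthening of the paper's (that $F(a')$ hits $\intervals(a')$), and your explicit identification of why $(v,w)$ must already have been processed is exactly the point the paper uses implicitly.
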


\begin{proof}
  Since we have a bidirected tree and an arc does not precede its
  reverse arc, 
  the inner `while' loop does not repeat arcs and hence it stops with
  some arc. This proves the first statement.

  We prove the second statement by induction on the algorithm's
  iterations. In the first iteration, the set $U$ consists of cut arcs
  so $a' \nprec a$ for all cut arcs $a'$. Therefore, for all
  $\interval(j) \in \intervals(a)$, $a$ is the arc on $R_j$ closest to
  $\bigcup_{i \in \interval(j)} P_i$ and $v \in \bigcup_{i \in
    \interval(j)} P_i$. This proves the base case. Now we prove the
  inductive case.
  Fix an interval $\interval(j) \in \intervals(a)$. If $a$ is the arc
  on $R_j$ closest to $\bigcup_{i \in \interval(j)} P_i$ and $v \in
  \bigcup_{i \in \interval(j)} P_i$, then $F(a) \cap \interval(j) \neq
  \emptyset$. If not, then there exists a neighboring arc $(v,w) \in
  R_j$ closer to $\bigcup_{i \in \interval(j)} P_i$. We have that
  $\interval(j) \in \intervals((v,w))$ and $(v,w) \prec a$. Since the
  algorithm has processed all cut arcs preceding $a$, by the inductive
  hypothesis we have $F((v,w)) \cap \interval(j) \neq \emptyset$. This
  implies that $M((v,w))$ is a hitting set for $\intervals((v,w))$ and
  so $F(a) \cap \interval(j) \neq \emptyset$.
  Hence, 
  $F(a)$ is a hitting set for $\intervals(a)$.  
\end{proof}

Let $E_i$ be the set of edges whose corresponding arcs are in $A_i$
and $\rcover = (E_1, \ldots, E_m)$, i.e. the undirected version of
$\alg$.
\begin{lemma}
  \label{lem:connected}
  $(\batches, \rcover)$ is an interval request cover.
\end{lemma}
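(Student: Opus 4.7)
\begin{proofsketch}
My plan is to verify the two conditions needed: (i) for each request $j$ the assignment $\assign(j)$ is well defined and lies in $\interval(j)$, and (ii) for each $i$, $E_i \cup P_i$ is a connected subgraph spanning $B_i$. Together these yield an interval request cover.

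For (i), I first handle the trivial case where $j$ lies on $\bigcup_{i \in \interval(j)} P_i$: then $\dpath_j$ has no arcs and $\assign(j) = w(j) \in \interval(j)$. Otherwise, let $a$ be the first arc of $\dpath_j$, so $j$ is incident to $a$ and $j \in C(a)$, hence $\interval(j) \in \intervals(a)$. By Lemma~\ref{lem:invariants}, $F(a)$ is a hitting set for $\intervals(a)$, and since the algorithm picks $M(a) \subseteq F(a)$ to be a \emph{minimal} hitting set for $\intervals(a)$, $M(a)$ still hits $\interval(j)$. Picking $i \in M(a) \cap \interval(j)$ gives $a \in A_i$, so $j$ is incident to $A_i$, showing that the set over which $\assign(j)$ is minimized is non-empty; and since every element of this set lies in $\interval(j)$, we have $\assign(j) \in \interval(j)$. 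In particular $\assign(j) \geq w(j)$, so the first condition of the request cover definition holds.

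For (ii), spanning is immediate: any $j \in B_i$ has $\assign(j) = i$, so by the definition of $\assign$, $j$ is a vertex incident to $A_i$ or $P_i$, i.e. a vertex of $E_i \cup P_i$. For connectedness, I will argue by induction on the order in which cut arcs are processed by Algorithm~\ref{alg:greedy-ext} that every arc $a \in A_i$ is connected to $P_i$ in $E_i \cup P_i$. Consider $a=(u,v) \in A_i$; since $i \in M(a) \subseteq F(a) = \{i : v \in P_i\} \cup \bigcup_{w : (v,w) \prec a} M((v,w))$, either $v \in P_i$ (so $a$ shares the endpoint $v$ with $P_i$ directly), or there exists $(v,w) \prec a$ with $i \in M((v,w))$, hence $(v,w) \in A_i$. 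In the latter case, the inner \texttt{while} loop guarantees that $(v,w)$ was processed strictly before $a$, so by the induction hypothesis $(v,w)$ is already connected to $P_i$ within $E_i \cup P_i$, and $a$ attaches to it at $v$.

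The main subtlety is lining up the directionality conventions: one has to check that the preceding arcs at $v$ are exactly the ones that point \emph{towards} $P_i$ from $v$, and that the algorithm's processing order really does handle them before $a$ (so the $M((v,w))$ values used in forming $F(a)$ are the final ones). Once this is in place, the induction closes cleanly and $(\batches, \rcover)$ satisfies both properties required of an interval request cover.
\end{proofsketch}
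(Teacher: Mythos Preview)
Your proof is correct and follows essentially the same approach as the paper: a case split on whether $\dpath_j$ is empty for the assignment condition, and the observation that an arc $a=(u,v)$ is added to $A_i$ only when $v\in P_i$ or $v$ is already incident to $A_i$ for connectivity. One small slip: in the case $j\in\bigcup_{i\in\interval(j)}P_i$ you assert $\assign(j)=w(j)$, but $j$ need not lie on $P_{w(j)}$ specifically---all you can (and need to) conclude is that the set defining $\assign(j)$ is nonempty, so $\assign(j)\in\interval(j)$.
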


\begin{proof}
  The connectivity of $E_i \cup P_i$ follows from the fact that the
  algorithm starts with $A_i = \emptyset$, and in each iteration an
  arc $a = (u,v)$ is added to $A_i$ only if $v \in P_i$ or $v$ is
  incident to some edge previously added to $A_i$.

  Now it remains to show that $\assign(j) \in \interval(j)$ for all
  requests $j$, i.e. that there exists $i \in \interval(j)$ such that
  $j$ is incident to $A_i$ or $P_i$. If $\dpath_j = \emptyset$, then
  $j \in \bigcup_{i \in \interval(j)} P_i$. On the other hand if
  $\dpath_j \neq \emptyset$, then let $a \in \dpath_j$ be the arc
  incident to $j$. 
  Since the algorithm processes all cut arcs, we have $a \in
  \bigcup_{i \in \interval(j)} A_i$ and thus $j$ is incident to
  $\bigcup_{i \in \interval(j)}A_i$. In both cases, we have
  $\assign(j) \in \interval(j)$.  
\end{proof}

Next, we analyze the cost of the algorithm. Let $D(a)$ be the number
of disjoint intervals in 
$\intervals(a)$.
\begin{lemma}
  \label{lem:disjoint-intervals}
  $D(a) \geq |M(a)|/2$ for all arcs $a$.
\end{lemma}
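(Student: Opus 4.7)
The plan is to exploit the minimality of $M(a)$ to exhibit a large collection of pairwise disjoint intervals in $\intervals(a)$. Since $M(a)$ is a minimal hitting set for $\intervals(a)$, for every $i \in M(a)$, removing $i$ would leave some interval uncovered; equivalently, there exists a \emph{witness interval} $I_i \in \intervals(a)$ such that $I_i \cap M(a) = \{i\}$.

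Next, I would enumerate $M(a)$ in increasing order as $i_1 < i_2 < \cdots < i_k$, where $k = |M(a)|$, and use the witness intervals to bound $D(a)$. For each $j$, the witness $I_{i_j}$ is an interval of integers containing $i_j$ but neither $i_{j-1}$ nor $i_{j+1}$; hence $I_{i_j} \subseteq (i_{j-1}, i_{j+1})$, with the natural convention $i_0 = -\infty$ and $i_{k+1} = +\infty$. Consequently, for any $j$, the intervals $I_{i_j}$ and $I_{i_{j+2}}$ are contained in $(i_{j-1}, i_{j+1})$ and $(i_{j+1}, i_{j+3})$ respectively, and so are disjoint.

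Picking every other index, the subcollection $\{I_{i_1}, I_{i_3}, I_{i_5}, \ldots\}$ consists of $\lceil k/2 \rceil$ pairwise disjoint intervals, all belonging to $\intervals(a)$. This gives $D(a) \geq \lceil |M(a)|/2 \rceil \geq |M(a)|/2$, as desired.

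I do not expect any serious obstacle here: the only subtle point is to make sure that the witness intervals can simultaneously be chosen for all elements of $M(a)$, which is immediate from the definition of minimality applied element-by-element (the choice of $I_i$ does not constrain the choice of $I_{i'}$). The argument relies only on the fact that $\intervals(a)$ consists of intervals of integers, so that the "between two consecutive hitters" containment yields disjointness for every-other witnesses.
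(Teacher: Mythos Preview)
Your proposal is correct and follows essentially the same argument as the paper: both use minimality of $M(a)$ to extract, for each element $i_l$ of $M(a)$ (listed in increasing order), a witness interval hit only at $i_l$, and then observe that witnesses with indices two apart are disjoint, yielding $\lceil |M(a)|/2\rceil$ pairwise disjoint intervals in $\intervals(a)$.
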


\begin{proof}
  Let $i_1 < \ldots < i_{|M(a)|}$ be the elements of $M(a)$. For each
  $1 \leq l \leq |M(a)|$, there exists an interval $\interval(j_{l})
  \in \intervals(a)$ such that $M(a) \cap \interval(j_{l}) =
  \{i_{l}\}$, because otherwise $M(a) \setminus \{i\}$ would still be
  a hitting set, contradicting the minimality of $M(a)$. We observe
  that the intervals $\interval(j_{l})$ and $\interval(j_{l+2})$ are
  disjoint since $\interval(j_l)$ contains $i_l$ and
  $\interval(j_{l+2})$ contains $i_{l+2}$ but neither contains
  $i_{l+1}$. 
  Therefore, the set of $\lceil |M(a)|/2 \rceil$ intervals
  $\{\interval(j_{l}) : \text{$1 \leq l \leq |M(a)|$ and $l$ odd}\}$
  is disjoint.  
\end{proof}




\begin{lemma}
  $d(\rcover) \leq 2 d(\yintopt)$.
\end{lemma}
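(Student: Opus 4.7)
The plan is to bound $d(\rcover)$ on an edge-by-edge (really arc-by-arc) basis. First, I would rewrite
\[d(\rcover) = \sum_i \sum_{a \in A_i} d_a = \sum_{a \text{ cut arc}} |M(a)| \cdot d_a,\]
using that $a \in A_i$ iff $i \in M(a)$, and recalling from the paragraph introducing \eqref{lp:d-interval} that each undirected edge has a unique orientation that can serve as a cut arc. Similarly,
\[d(\yintopt) = \sum_a \sum_i \yintopt(a,i) d_a.\]
So it suffices to show, arc by arc, that $|M(a)| \leq 2 \sum_i \yintopt(a,i)$ for every cut arc $a$.

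Next I would exploit Lemma~\ref{lem:fractional-feasible}, which says that for each cut arc $a$ the vector $(\yintopt(a,i))_i$ is a fractional hitting set for the interval system $\intervals(a)$. The key step is then a standard LP-duality/packing observation: for any fractional hitting set $z$ of a collection of intervals, and any subcollection of pairwise disjoint intervals $J_1, \ldots, J_D$, we have $\sum_i z_i \geq \sum_l \sum_{i \in J_l} z_i \geq D$, the inequalities following from disjointness of the $J_l$ and the hitting-set constraint applied to each $J_l$ in turn. Applying this to $\yintopt(a, \cdot)$ and a maximum disjoint subfamily of $\intervals(a)$ gives
\[\sum_i \yintopt(a,i) \geq D(a).\]

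Finally, I would invoke Lemma~\ref{lem:disjoint-intervals} to get $D(a) \geq |M(a)|/2$, yielding
\[|M(a)| \leq 2 D(a) \leq 2 \sum_i \yintopt(a,i).\]
Multiplying by $d_a$ and summing over cut arcs concludes $d(\rcover) \leq 2 d(\yintopt)$.

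I do not expect a serious obstacle: the arc-by-arc decomposition is transparent, and the only nontrivial ingredient is the disjoint-intervals lower bound on any fractional interval hitting set, which is elementary. The one place to be careful is the bijection between edges and cut arcs when converting the sums from arc form to edge form in $d(\rcover)$ and $d(\yintopt)$, but this is already established in the text preceding \eqref{lp:d-interval}.
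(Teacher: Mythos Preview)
Your proposal is correct and follows the same arc-by-arc strategy as the paper: the paper likewise writes $d(\rcover)=\sum_a |M(a)|\,d_a$ and combines Lemma~\ref{lem:fractional-feasible} with Lemma~\ref{lem:disjoint-intervals} to obtain $\sum_i \yintopt(a,i)\ge D(a)\ge |M(a)|/2$ for each arc, then sums. The only difference is that you spell out the elementary packing step ($\sum_i \yintopt(a,i)\ge D(a)$ via disjointness of the witnessing intervals), which the paper leaves implicit in its citation of Lemma~\ref{lem:fractional-feasible}.
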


\begin{proof}
  Fix an arc $a$. From Lemmas \ref{lem:fractional-feasible} and
  \ref{lem:disjoint-intervals}, we get
  \[\sum_i \yintopt(a,i) \geq D(a) \geq |M(a)|/2.\]
  Since $d(\rcover) = d(\alg)$, we have
  \begin{align*}
    d(\rcover) 
    &= \sum_a |M(a)| \cdot d_a\\
    &\leq \sum_a \left(2\sum_i \yintopt(a,i)\right) \cdot d_a 
    = 2d(\yintopt).
  \end{align*} 
\end{proof}



Together with Lemmas \ref{lem:2-strict} and
\ref{lem:fract-interval-LP}, we have that $(\batches, \rcover)$ is a
$2$-strict request cover of length at most $4d(y^*) \leq
4d(\rcoveropt)$. Lemmas \ref{lem:rcoveropt} and \ref{lem:construct}
imply that $\server(\batches, \rcover)$ travels at most $9\OPT$ and
uses a buffer of capacity $4k+1$. This gives us the following theorem.

\begin{theorem}
  \label{thm:trees}
  There exists an offline $\left(9, 4+\frac{1}{k}\right)$-bicriteria
  approximation for RBP when the underlying metric is a weighted tree.
\end{theorem}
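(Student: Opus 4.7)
The plan is to combine the pieces already assembled into a single chain of inequalities; essentially all of the technical work has been done in the preceding lemmas, and the remaining task is bookkeeping to track how the constants accumulate. Concretely, I would run Algorithm \ref{alg:greedy-ext} on the instance to produce a pair $(\batches, \rcover)$, and then invoke the lemmas in order.

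First, I would apply the greedy-extension bound $d(\rcover) \leq 2 d(\yintopt)$ together with Lemma \ref{lem:fract-interval-LP}, which gives $d(\yintopt) \leq 2 d(y^*)$. Since $y^*$ is the optimal fractional solution of \eqref{lp:rcp}, which is a relaxation of the integer formulation whose optimum is $\rcoveropt$, we have $d(y^*) \leq d(\rcoveropt)$. Chaining these together yields $d(\rcover) \leq 4 d(\rcoveropt)$. Then Lemma \ref{lem:rcoveropt} gives $d(\rcoveropt) \leq d(\OPT)$, so altogether $d(\rcover) \leq 4 d(\OPT)$.

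Next, I would invoke Lemma \ref{lem:connected} to conclude that $(\batches, \rcover)$ is an interval request cover, and then Lemma \ref{lem:2-strict} to conclude that it is $\buffer$-feasible with $\buffer = 2$. Now I can feed this into Lemma \ref{lem:construct}: the server $\server(\batches, \rcover)$ is a feasible RBP solution for a buffer of size $(2+\buffer) k + 1 = 4k+1 = \left(4 + \tfrac{1}{k}\right) k$, and its length is at most
\[
d(\OPT) + 2 d(\rcover) \;\leq\; d(\OPT) + 8 d(\OPT) \;=\; 9 d(\OPT).
\]
This is precisely a $(9, 4+\tfrac{1}{k})$-bicriteria approximation.

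There is no real obstacle at this stage; the only thing to be careful about is keeping track of the factor-$2$ losses incurred in three separate places (doubling to satisfy \eqref{lp:interval} from $(x^*,y^*)$ in Lemma \ref{lem:fract-interval-LP}, the minimal-vs-fractional hitting set loss in Lemma \ref{lem:disjoint-intervals} which underlies the greedy-extension bound, and the Eulerian traversal of each $E_i(p)$ in Lemma \ref{lem:construct}) and then folding in the additive $d(\OPT)$ from the mandatory traversal of the terminal path. Multiplying these out gives the cost bound $9\OPT$, and the $\buffer=2$-feasibility cleanly translates into the buffer bound $(2+\buffer)k + 1 = 4k+1$, completing the theorem.
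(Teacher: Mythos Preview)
Your proposal is correct and matches the paper's own argument essentially line for line: chain the greedy-extension bound with Lemma~\ref{lem:fract-interval-LP} and the relaxation inequality $d(y^*)\le d(\rcoveropt)$ to get $d(\rcover)\le 4d(\rcoveropt)\le 4d(\OPT)$ via Lemma~\ref{lem:rcoveropt}, then use Lemmas~\ref{lem:connected} and~\ref{lem:2-strict} to feed $\buffer=2$ into Lemma~\ref{lem:construct}, yielding cost $d(\OPT)+2d(\rcover)\le 9d(\OPT)$ and buffer $4k+1$. The accounting of the three factor-$2$ losses plus the additive $d(\OPT)$ is exactly what the paper does.
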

Using tree embeddings of \cite{FRT}, we get
\begin{theorem}
  \label{thm:general}
  There exists an offline $\left(O(\log n),
    4+\frac{1}{k}\right)$-bicriteria approximation for RBP over
  general metrics.
\end{theorem}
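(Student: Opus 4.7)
The plan is to apply the FRT randomized tree embedding~\cite{FRT} as a black box reduction to Theorem~\ref{thm:trees}. Given an RBP instance on an arbitrary $n$-point metric $(V,d)$, I would sample a dominating tree metric $(V,d_T)$ (with possibly additional Steiner vertices) from the FRT distribution, so that $d_T(u,v)\ge d(u,v)$ for all $u,v\in V$ and $\mathbb{E}[d_T(u,v)] \le O(\log n)\cdot d(u,v)$.

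The key observation is that the buffer capacity constraint depends only on the sequence of reads and the order in which requests are served; it is combinatorial and does not reference distances. Therefore a schedule is feasible on $(V,d)$ with buffer $k$ if and only if it is feasible on $(V,d_T)$ with buffer $k$. In particular, evaluating the optimal $(V,d)$-schedule on the sampled tree yields a feasible schedule on $(V,d_T)$, so if $\OPT_T$ denotes the optimal RBP cost on $(V,d_T)$ with buffer $k$, linearity of expectation gives $\mathbb{E}[\OPT_T]\le O(\log n)\cdot \OPT$.

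Next, apply Theorem~\ref{thm:trees} on $(V,d_T)$ to obtain a schedule $\sigma$ whose $d_T$-cost is at most $9\,\OPT_T$ and whose buffer usage is at most $4k+1$. By the same combinatorial argument, $\sigma$ is feasible on $(V,d)$ with buffer $4k+1$. Since $d(u,v)\le d_T(u,v)$ for every pair, the $d$-cost of $\sigma$ is at most its $d_T$-cost, hence at most $9\,\OPT_T$. Taking expectations over the tree sample yields expected cost at most $9\,O(\log n)\,\OPT = O(\log n)\,\OPT$, and a standard repetition argument (sample polynomially many FRT trees, run the algorithm on each, and output the cheapest resulting schedule) or standard derandomization converts this into a deterministic $(O(\log n), 4+1/k)$-bicriteria approximation.

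There is no real obstacle here beyond the combinatorial-buffer observation: the heavy lifting is already done in Theorem~\ref{thm:trees}, and the FRT guarantee plugs in directly because the buffer constraint transfers across the embedding while the metric stretches only by $O(\log n)$ in expectation.
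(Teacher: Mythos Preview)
Your proposal is correct and matches the paper's approach: the paper simply states ``Using tree embeddings of \cite{FRT}, we get'' Theorem~\ref{thm:general}, relying on exactly the black-box reduction you spell out. Your write-up is in fact more detailed than the paper's one-line justification, and the key observation you highlight---that buffer feasibility is purely combinatorial and hence transfers across metrics---is precisely what makes the reduction go through.
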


\bibliographystyle{plain}
\bibliography{bicriteria}

\begin{thebibliography}{10}

\bibitem{adamaszek2011almost}
A.~Adamaszek, A.~Czumaj, M.~Englert, and H.~R{\"a}cke.
\newblock Almost tight bounds for reordering buffer management.
\newblock In {\em STOC 2011}.

\bibitem{avigdor2010improved}
N.~Avigdor-Elgrabli and Y.~Rabani.
\newblock An improved competitive algorithm for reordering buffer management.
\newblock In {\em SODA 2010}.

\bibitem{corr/abs-1202-4504}
N.~Avigdor-Elgrabli and Y.~Rabani.
\newblock A constant factor approximation algorithm for reordering buffer
  management.
\newblock {\em CoRR}, abs/1202.4504, 2012.

\bibitem{bansal2003server}
N.~Bansal and K.~Pruhs.
\newblock Server scheduling in the l p norm: a rising tide lifts all boat.
\newblock In {\em STOC 2003}.

\bibitem{Blandford2002}
D.~Blandford and G.~Blelloch.
\newblock Index compression through document reordering.
\newblock In {\em Proceedings of the Data Compression Conference}, DCC 2002.

\bibitem{corr/abs-1009-4355}
H.L. Chan, N.~Megow, R.~van Stee, and R.~Sitters.
\newblock The sorting buffer problem is np-hard.
\newblock {\em CoRR}, abs/1009.4355, 2010.

\bibitem{englert2007reordering}
M.~Englert, H.~R{\"a}cke, and M.~Westermann.
\newblock Reordering buffers for general metric spaces.
\newblock In {\em STOC 2007}.

\bibitem{englert2005reordering}
M.~Englert and M.~Westermann.
\newblock Reordering buffer management for non-uniform cost models.
\newblock {\em ICALP 2005}.

\bibitem{FRT}
J.~Fakcharoenphol, S.~Rao, and K.~Talwar.
\newblock A tight bound on approximating arbitrary metrics by tree metrics.
\newblock {\em J. Comput. Syst. Sci.}, 69(3), 2004.

\bibitem{gamzu2007improved}
I.~Gamzu and D.~Segev.
\newblock Improved online algorithms for the sorting buffer problem.
\newblock {\em STACS 2007}.

\bibitem{kalyanasundaram2000speed}
B.~Kalyanasundaram and K.~Pruhs.
\newblock Speed is as powerful as clairvoyance.
\newblock {\em Journal of the ACM}, 47(4), 2000.

\bibitem{khandekar2010online}
R.~Khandekar and V.~Pandit.
\newblock Online and offline algorithms for the sorting buffers problem on the
  line metric.
\newblock {\em Journal of Discrete Algorithms}, 8(1), 2010.

\bibitem{krokowski2004reducing}
J.~Krokowski, H.~R{\"a}cke, C.~Sohler, and M.~Westermann.
\newblock Reducing state changes with a pipeline buffer.
\newblock In {\em VMV 2004}.

\bibitem{pruhs2004handbook}
K.~Pruhs, J.~Sgall, and E.~Torng.
\newblock Handbook of scheduling: Algorithms, models, and performance analysis.
\newblock 2004.

\bibitem{Racke02onlinescheduling}
H.~R{\"a}cke, C.~Sohler, and M.~Westermann.
\newblock Online scheduling for sorting buffers.
\newblock In {\em ESA 2002}.

\bibitem{Roughgarden02howbad}
T.~Roughgarden and {\'E}.~Tardos.
\newblock How bad is selfish routing?
\newblock {\em Journal of the ACM}, 49(2), 2002.

\bibitem{sleator1985amortized}
D.D. Sleator and R.E. Tarjan.
\newblock Amortized efficiency of list update and paging rules.
\newblock {\em Communications of the ACM}, 28(2), 1985.

\bibitem{spieckermann2004sequential}
S.~Spieckermann, K.~Gutenschwager, and S.~Vosz.
\newblock A sequential ordering problem in automotive paint shops.
\newblock {\em International journal of production research}, 42(9), 2004.

\end{thebibliography}

\appendix
\section{Gap Between Bicriteria and True Approximations}
\label{appendix:counterexample}
In this section we prove that there exists an instance on the
evenly-spaced line metric in which the optimal offline solution with a
buffer of size $k/4$ 
has to travel $\Omega(k)$ times the distance of the optimal offline
solution with a buffer of size $k$.

We consider a line graph $L$ with $2^k$ vertices $p_1 < \ldots <
p_{2^k}$ and unit-length edges. The input is a sequence of requests
described by a binary tree $S$ of depth $k$. Let $r$ be the root of
$S$. We denote the subtree rooted at a vertex $v$ by $S(v)$.
Let 
$l_i$ be the $i$-th leaf according to the preordering of the tree.
We define the \emph{destination label} of vertex $v$ to be $t(v) =
\max \{i : l_i \in S(v)\}$ and the \emph{origin label} of $v$ to be
$s(v) = \min \{i : l_i \in S(v)\}$. That is, $t(v)$ and $s(v)$ are the
highest and lowest indices of any leaf in the subtree rooted at $v$,
respectively. 
The input sequence is constructed as follows. First we obtain the
sequence of vertices according to the preordering of the tree.  Then
we replace each non-leaf vertex $v$ in the sequence with a request
lying at $p_{t(v)}$ on the line, and each leaf vertex $l_i$ with a
block (which we refer to as a \emph{leaf block}) of $k$ requests lying
at $p_i$ on the line. 
For non-leaf vertices, we overload notation and use $v$ to refer both
to the vertex in the binary tree and the corresponding request.

Let $\OPT(k)$ and $\OPT(k/4)$ be the optimal offline solutions to the
above input sequence that use buffers of capacity $k$ and $k/4$,
respectively. 
\begin{theorem}
  \label{thm:counterexample}
  We have $\OPT(k/4) \geq \Omega(k) \OPT(k)$.
\end{theorem}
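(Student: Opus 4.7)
The plan is to establish two complementary bounds: (a) $\OPT(k) = O(2^k)$, proved by exhibiting an explicit schedule with buffer $k$, and (b) $\OPT(k/4) = \Omega(k \cdot 2^k)$, proved by a structural lower bound based on the depth-$k$ binary tree structure. Dividing the two then gives the claimed $\Omega(k)$ ratio.

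For the upper bound (a), I would construct a schedule that processes the input in preorder while maintaining the invariant that the buffer holds exactly the ancestors in $S$ of the currently processed leaf block. Since $S$ has depth $k$, the buffer contains at most $k$ ancestor requests at any moment, fitting within its capacity. The server traverses the line from $p_1$ to $p_{2^k}$ in a single left-to-right sweep; at each position $p_i$ it serves the leaf block for $l_i$ together with any ancestor $v$ of $l_i$ with $t(v) = i$ (a \emph{right-ancestor}). For leaves $l_i$ with no right-ancestor, a short $O(1)$-distance detour suffices to free a buffer slot by serving the nearest non-right-ancestor; the accumulated overhead of these detours is $O(2^k)$, so the total cost is $O(2^k)$.

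For the lower bound (b), the key observation is that each non-leaf $v$ at depth $d$ sits at the rightmost position $p_{t(v)}$ of its subtree's range $[p_{s(v)}, p_{t(v)}]$ of width $2^{k-d}$. After $v$ is read in preorder, the $\Theta(k \cdot 2^{k-d})$ requests of $S(v)$ follow. To serve $v$ ``naturally'' at the end of $S(v_R)$'s processing, when the server has arrived at $p_{t(v)}$, the buffer must retain $v$ throughout. Simultaneously retaining all ancestors of $v$ requires buffer size at least $d$; with buffer only $k/4$, at most $k/4$ of the $k$ ancestors of any leaf can be held at a given time, so at least $3k/4$ of them must be served \emph{early}. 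Each early serve of an ancestor $u$ at depth $d_u$ during the processing of $u$'s left subtree forces the server to detour from a position in $[p_{s(u)}, p_{t(u_L)}]$ to $p_{t(u)}$ and back, incurring extra travel of at least $2^{k-d_u}$. Attributing these detour costs to the appropriate non-leaves at each depth and summing over $d \in [k/4, k-1]$, the total extra cost is at least $\sum_{d=k/4}^{k-1} 2^d \cdot \Omega(2^{k-d}) = \Omega(k \cdot 2^k)$.

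The main obstacle is formalizing the lower-bound charging argument: I need to show that the extra travel for different early-served ancestors can be associated with essentially disjoint portions of the server's trajectory (so the bounds sum without double counting), and that a detour to $p_{t(u)}$ cannot be amortized away by being on the natural path for other reasons. This can be handled either by a potential function tracking the buffer's unserved-ancestor count over time, or equivalently by an induction on $S$ that attributes to each subtree rooted at depth $d$ a minimum processing cost of $\Omega((d - k/4) \cdot 2^d)$ whenever $d > k/4$. Once both bounds are in hand, the ratio $\OPT(k/4)/\OPT(k) = \Omega(k \cdot 2^k)/O(2^k) = \Omega(k)$ completes the proof.
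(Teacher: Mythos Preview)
Your overall strategy matches the paper's: an explicit schedule shows $\OPT(k) \le 2^k - 1$, and a structural argument shows $\OPT(k/4) = \Omega(k \cdot 2^k)$. Both halves need tightening, however.

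For the upper bound, the detour concern is unnecessary. The single left-to-right sweep that, at $p_i$, serves every buffered request $v$ with $t(v)=i$ already keeps the buffer within $k$: after finishing the $i$-th leaf block the buffer holds precisely those proper ancestors of $l_i$ whose rightmost leaf lies strictly to the right of $l_i$, and there are at most $k-1$ of these. No detours are required, and the total travel is exactly $2^k-1$.

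For the lower bound, your charging argument has the double-counting issue you flag, and the inductive hypothesis you propose is mis-stated: a subtree rooted at depth $d$ has only $2^{k-d}$ leaves, so attributing cost $\Omega((d-k/4)\cdot 2^d)$ to it cannot be right (for $d$ near $k$ this exceeds any plausible per-subtree cost). The paper's fix is to run the induction on \emph{height} and, crucially, to fold buffer occupancy into the potential. With $D(v)$ the travel cost for serving the requests of $S(v)$ and $C(v)=\sum_i |C_i\cap S(v)|$ where $C_i$ is the buffer at the end of phase $i$, the key lemma is
\[
C(v)+D(v)\ \ge\ \tfrac{h(v)}{2}\,2^{h(v)},
\]
proved by induction on $h(v)$: for a vertex $v$ with children $v_1,v_2$, either $v$ is served while the server is still in $v_1$'s phase range (forcing a detour of length $\ge 2^{h(v)-1}$, since the server must be at $p_{i'-1}$ at the start of that phase and nothing to the right of $p_{t(v_1)}$ has yet been read), or $v$ occupies the buffer for at least $2^{h(v)-1}$ phases. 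Applied at the root and using $C(r)\le (k/4)\cdot 2^k$, this yields $D(r)\ge \tfrac{k}{4}\,2^k$. Your instinct to use ``a potential function tracking the buffer's unserved-ancestor count'' is exactly right; the missing piece is that this potential must be \emph{summed} with travel cost so that the two sides of the trade-off (hold $v$ long vs.\ serve $v$ early) are accounted for in a single quantity, which is what makes the induction close.
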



\begin{example}
  For $k=2$, the line metric is represented by the integers $1,2,3,4$
  and the input sequence is $4,2,1,1,2,2,4,3,3,4,4$.
\end{example}

\begin{figure}
\begin{center}
\includegraphics[height=2.0in]{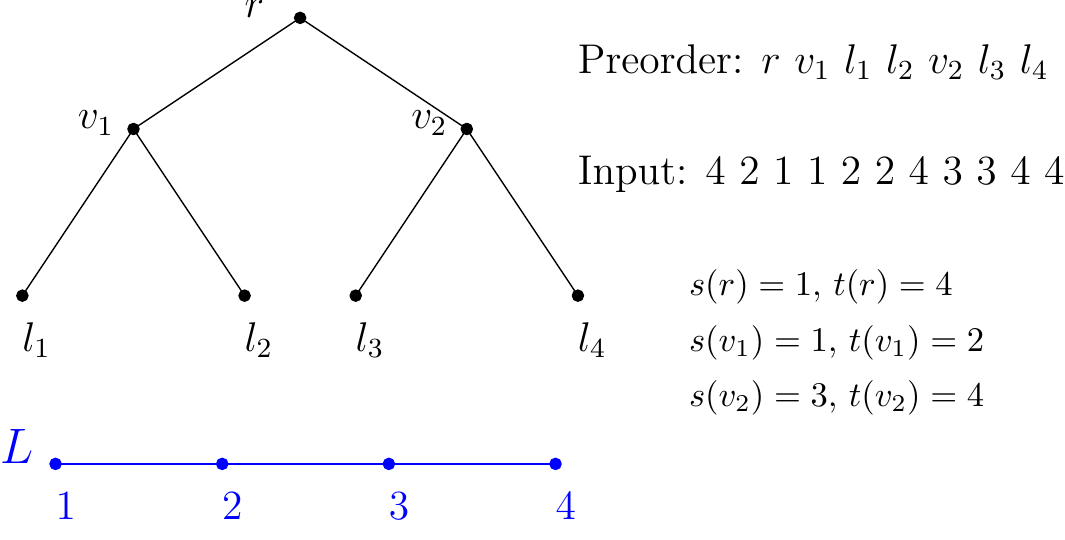}
\caption{Example for $k=2$}
\label{fig:gap-example}
\end{center}
\end{figure}


We present a server $\kserver$ that uses a buffer of size at most $k$
and travels a distance of $2^k-1$ on the above input sequence.
 \begin{algorithm}
  \caption{$\kserver$}
  \begin{algorithmic}[1]
   \label{alg:optk}
   \FOR {$i = 1$ \TO $2^k$}
   \STATE Move to $p_i$
   \STATE Serve all requests $v$ from the input that has $t(v) = i$
   \ENDFOR
  \end{algorithmic}
\end{algorithm}

\begin{lemma}
  On the above input sequence, $\kserver$ uses a buffer of size at
  most $k$ and travels a distance of $2^k - 1$.
\end{lemma}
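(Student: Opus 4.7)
The plan is to verify the two claims of the lemma—the distance bound and the buffer bound—separately.

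The distance computation is direct. Assuming the server starts at $p_1$, in iteration $i$ it moves from $p_{i-1}$ to $p_i$ (with no motion when $i=1$), incurring unit cost because the underlying metric is the evenly-spaced line. Summing over $i=1, \ldots, 2^k$ gives total distance exactly $2^k - 1$.

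For the buffer bound, I will maintain the following invariant: immediately after iteration $i$ concludes, the buffer contains exactly the internal (non-leaf) vertices appearing in the first $q(i)$ positions of the input whose destination label exceeds $i$, where $q(i)$ denotes the input index of the last copy of $l_i$. I will identify this set structurally using the preorder construction of the input: an internal vertex $v$ lies in the first $q(i)$ positions iff $v$ is visited before $l_i$ in the preorder traversal of $S$, and any such $v$ with $t(v) > i$ must be a proper ancestor of $l_i$ (vertices of $S$ whose entire subtree is preorder-earlier than $l_i$ have $t(v) < i$). Because $l_i$ has exactly $k$ proper ancestors in the depth-$k$ complete binary tree, the end-of-iteration buffer contains at most $k$ requests.

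To promote the end-of-iteration bound to a bound valid throughout the execution, I will specify the following processing order within iteration $i$: first the server serves any residual $t = i$ requests carried over from the previous iteration; then it reads the input positions $q(i-1)+1, \ldots, q(i)$ one at a time, serving each copy of $l_i$ and each internal vertex with $t(v) = i$ (both located at $p_i$) on arrival. Under this discipline the only requests that linger in the buffer during iteration $i$ are internal vertices with $t > i$ from the relevant prefix, and a bookkeeping check shows that the buffer size evolves monotonically from $|\{v : v\text{ an ancestor of }l_{i-1},\ t(v) > i-1\}|$ at the start to $|\{v : v\text{ an ancestor of }l_i,\ t(v) > i\}|$ at the end, both of which are bounded by $k$.

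The main technical subtlety is justifying the ``serve on arrival'' step without an apparent transient violation of the capacity: if the server is already at $p_i$ and reads a request located at $p_i$, the read and the serve can be executed as a single atomic action, so the request never occupies a buffer slot. This device is exactly what allows the $k$ slots, which may all be simultaneously filled by ancestors of $l_i$ with $t > i$, to coexist with the $k$ fresh $l_i$ reads processed in the same iteration.
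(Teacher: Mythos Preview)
Your proof is correct and follows essentially the same approach as the paper: the distance bound comes from traversing $p_1,\ldots,p_{2^k}$ once, and the buffer bound comes from identifying the buffered requests with the proper ancestors of the current leaf $l_i$ having destination label exceeding $i$, which is at most the depth $k$ of $S$. Your treatment of the mid-iteration behavior (the ``serve on arrival'' device) is more explicit than the paper's, which simply reserves one buffer slot for reading; note, though, that your ``evolves monotonically'' remark is not literally true---the buffer first shrinks when the carried-over $t=i$ requests are served and then grows as the new ancestors of $l_i$ are read---but both endpoints and all intermediate states are subsets of ancestor sets of size at most $k$, so the bound is unaffected.
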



\begin{proof}
  Since $\kserver$ visits each vertex of the line graph exactly once,
  it travels a distance of $2^k-1$.

  At the beginning of the $i$-th iteration, $\kserver$ has just
  finished reading the $(i-1)$-th leaf block and is at
  $p_i$. Furthermore, it has also served all requests that reside at
  $p_1, \ldots, p_{i-1}$. Thus, it needs to maintain in its buffer
  only the requests $v$ up till the $i$-th leaf block that have $t(v)
  > i$. Since the input sequence is constructed using the preordering
  of $S$, these requests 
  correspond to the ancestors of leaf $l_i$ in $S$. The tree $S$ is of
  depth $k$ so it needs to maintain at most $k-1$ requests in its
  buffer at all times, in addition to a space of $1$ that is needed to
  read requests from the input.
  %
\end{proof}


Next, we show that $\OPT(k/4) \geq \frac{k}{4}\OPT(k)$. Let $\reduced$
be an optimal server with a buffer of size $k/4$ for the above input
sequence. We analyze the movement of $\reduced$ in phases. We define
the $i$-th phase to be the duration starting from the time the last
request of the $(i-1)$th leaf block is read to the time the last
request of the $i$-th leaf block is read.
\begin{lemma}
  \label{lem:window}
  At the end of the $i$-th phase, $\reduced$ is at $p_i$.
\end{lemma}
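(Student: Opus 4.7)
The plan is to prove the lemma by induction on $i$, with the inductive hypothesis that $\reduced$ is at $p_{i-1}$ at the end of phase $i-1$. The key structural fact to exploit is that the $i$-th leaf block consists of $k$ copies of the same request located at $p_i$, while $\reduced$ has only $k/4$ buffer slots.

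My first step would establish that $\reduced$ must be located at $p_i$ many times during phase $i$: all $k$ leaf-block requests are read during the phase, and the buffer at the end of the phase holds at most $k/4$ requests, so at least $k - k/4 = 3k/4$ of these requests must have been served during the phase, each requiring $\reduced$ to be at $p_i$ at the moment of the serve.

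The second step argues, via an exchange argument based on optimality, that the last movement of phase $i$ must be to $p_i$. Since reading a request does not change the server's position, $\reduced$'s position at the end of phase $i$ is determined by its last serve-induced movement. Suppose for contradiction the last such movement of phase $i$ takes $\reduced$ from $p_i$ (where it had been serving leaf-block requests) to some $p_j \neq p_i$ in order to serve a request $r$ at $p_j$. I would construct an alternative schedule $\reduced'$ that defers the serve of $r$ to a later phase, shaving the $|p_i - p_j|$ move from phase $i$. Since all future leaf blocks lie at positions $p_{i+1}, p_{i+2}, \ldots$ and $\reduced'$ must anyway traverse a superset of the region containing $p_j$ as it sweeps toward those positions, the deferred serve can be folded into phase $i+1$'s trajectory without extra travel; the result is a schedule of strictly lower cost, contradicting optimality. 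Hence the last movement of phase $i$ is to $p_i$, placing $\reduced$ at $p_i$ at the end of the phase.

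The main obstacle is making the exchange step rigorous: one must verify that deferring the serve of $r$ at $p_j$ to phase $i+1$ respects the $k/4$ buffer capacity at every intermediate instant (the deferred request occupies one extra slot until it is eventually served), and that the cost savings of $|p_i - p_j|$ in phase $i$ are not offset by any forced additional travel later. Handling the case where multiple off-$p_i$ serves occur after $\reduced$'s last $p_i$-visit within phase $i$ requires applying the exchange iteratively and carefully tracking buffer state.
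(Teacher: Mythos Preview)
Your exchange-argument instinct is right, but the specific exchange you propose has gaps beyond the one you flag, and the paper's route is much shorter. The paper argues in one line: since all $k$ requests of the $i$-th leaf block lie at the same point $p_i$, an optimal server may w.l.o.g.\ handle them as a unit --- either serve the whole block together (staying at $p_i$ while reading it) or buffer the whole block. The buffer holds only $k/4 < k$ requests, so the latter is impossible; hence the server is at $p_i$ when the last leaf-block request is read, i.e., at the end of phase $i$. No induction is needed (and indeed your inductive hypothesis is never actually invoked in your sketch).

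Your proposed exchange --- defer the last off-$p_i$ serve of $r$ at $p_j$ to a later phase --- has two further problems. First, the claim that $\reduced'$ ``must anyway traverse a superset of the region containing $p_j$'' is unjustified: $p_j$ may lie to the \emph{left} of $p_i$ (for instance $r$ could be a non-leaf request $v$ from an already-explored left subtree, with $t(v) < i$, or a leftover earlier leaf-block request), while all future leaf blocks lie to the right, so there is no reason the server passes $p_j$ later for free. Second, for a w.l.o.g.\ you need only ``no increase in cost,'' not strict improvement; your framing as a contradiction with optimality forces you to prove a strict saving, which need not hold. The exchange that does work goes the other way: rather than deferring $r$, \emph{advance} the remaining leaf-block reads --- at the server's last visit to $p_i$ within the phase, have it stay put, read and immediately serve the remaining leaf-block requests (so phase $i$ now ends at $p_i$), and only then perform the original server's subsequent walk. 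This uses the same movements (cost unchanged) and clears the leaf-block requests at least as early as before (buffer load never increases). That is precisely the content of the paper's w.l.o.g.
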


\begin{proof}
  Since the requests of the $i$-th leaf block all lie at $p_i$ on the
  line metric, we assume w.l.o.g. that either the entire block is
  served together or buffered together. However, the block is of
  length $k$, thus $\reduced$ must serve the entire block.
\end{proof}


For request $v$, let $d(v)$ be the distance travelled between
$p_{t(v)}$ and the previously served request, and $D(v) = \sum_{u \in
  S(v)} d(v)$. Then, the total cost to serve non-leaf vertices is
$D(r) = \sum_v d(v)$. 
We define $C_i$ to be the contents of $\reduced$'s buffer at the end
of the $i$-th phase, i.e. when it reads the last request of the $i$-th
block. Let $C_i(v) = C_i \cap S(v)$ and $C(v) = \sum_{i \in [s(v),
  t(v)]} |C_i(v)|$.


Let $h(v)$ denote the height of $v$.
\begin{lemma}
  \label{lem:induction}
  We have $C(v) + D(v) \geq \frac{h(v)}{2}2^{h(v)}$ for all
  vertices $v$.
\end{lemma}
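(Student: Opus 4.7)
I proceed by strong induction on the height $h = h(v)$. The base case $h = 0$ is trivial since $v$ is a leaf, making the right-hand side $0$ while $C(v), D(v) \geq 0$.

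For the inductive step, let $v$ have children $v_L, v_R$ of height $h-1$. Decomposing $S(v) = \{v\} \cup S(v_L) \cup S(v_R)$ gives $D(v) = d(v) + D(v_L) + D(v_R)$. For the buffer term I use $C_i(v) = [v \in C_i] + C_i(v_L) + C_i(v_R)$, sum over $i \in [s(v), t(v)]$, and note that no request of $S(v_R)$ is released before phase $s(v_R) = t(v_L) + 1$, which yields
\[C(v) \;\geq\; N_v + C(v_L) + C(v_R),\qquad N_v = \bigl|\{i \in [s(v), t(v)] : v \in C_i\}\bigr|.\]
Applying the inductive hypothesis to $v_L$ and $v_R$ (each contributing $\tfrac{h-1}{2}2^{h-1}$) reduces the whole claim to the marginal bound $N_v + d(v) \geq 2^{h-1}$.

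To prove the marginal bound, let $f(v)$ denote the phase in which $v$ is served. Since $v$ is released in phase $s(v)$, we have $N_v = f(v) - s(v)$. If $f(v) > t(v_L)$, then $v$ sits in the buffer through all of $v_L$'s time window and $N_v \geq 2^{h-1}$ directly. Otherwise $f(v) \leq t(v_L)$, and Lemma~\ref{lem:window} places the server at $p_{f(v)}$ by the end of phase $f(v)$, at distance at least $2^{h-1}$ from $p_{t(v)}$. I then classify the request served immediately before $v$ by the preorder structure of the input: every request read on or before phase $f(v)$ must be (i) a leaf $l_j$ with $j \leq t(v_L)$, (ii) a non-leaf vertex of $S(v_L)$, (iii) a request from a subtree entirely preceding $v$'s subtree, or (iv) an ancestor of $v$. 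In cases (i)--(iii) the prior service lies at position $\leq p_{t(v_L)}$, which forces $d(v) \geq t(v) - t(v_L) = 2^{h-1}$.

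The main obstacle is case (iv), where the prior service is an ancestor $u$ of $v$ with $t(u) \geq t(v)$; in the extreme situation $t(u) = t(v)$ one has $d(v) = 0$ naively, so some additional accounting is needed. My plan is a charging argument: consider the maximal chain of co-located services the server performs at $p_{t(v)}$ during phase $f(v)$, and let $u_0$ be its first member. Applying the four-way classification to $u_0$ forces $d(u_0) \geq 2^{h-1}$. Since $u_0$ is a proper ancestor of $v$, this $d$-value does not lie in $D(v)$, so the detour that carried the server out to $p_{t(v)}$ during $v_L$'s window must be paid for elsewhere; I expect that it forces some request of $S(v_L)$ to be deferred into $v_R$'s window, contributing the missing $2^{h-1}$ via the non-negative term $\sum_{i > t(v_L)} C_i(v_L)$ I discarded when bounding $C(v)$. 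Verifying that this deferral is always large enough to compensate the shortfall in $d(v)$ is the delicate piece of the argument and the main technical hurdle.
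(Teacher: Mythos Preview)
Your inductive skeleton, the decomposition $C(v)+D(v) = (C(v_L)+D(v_L)) + (C(v_R)+D(v_R)) + (N_v + d(v))$, and the reduction to the marginal bound $N_v + d(v) \ge 2^{h-1}$ are exactly what the paper does. The case split on whether $v$ is served during $v_L$'s window or later also matches.

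The divergence is in how you treat your case (iv). The paper does not attempt any charging. Instead it disposes of this case by a one-line normalization: since $\reduced$ is optimal on a line and (by Lemma~\ref{lem:window}) sits at $p_{i'-1}$ at the start of phase $i'$, one may assume without loss of generality that the request $v'$ served immediately before $v$ satisfies $t(v') \le t(v)$. Combined with the preorder observation you already made---that by phase $i' \le t(v_L)$ no request $u$ with $t(u)\in [s(v_R), t(v_R))$ has been read---this forces $t(v') < s(v_R)$ and hence $d(v)\ge t(v)-t(v_L)=2^{h-1}$ directly. Concretely, the normalization is that an optimal line server can be taken to serve buffered requests the first time it passes their location, and when several buffered requests share the location $p_{t(v)}$ (these are exactly $v$ and some ancestors, as you noted), to serve $v$ first; neither change affects cost or the buffer contents $C_i$.

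Your proposed charging argument is therefore unnecessary, and as you yourself flag, it is not actually carried out: the claim that a detour to $p_{t(v)}$ during $v_L$'s window forces enough of $S(v_L)$ to be deferred past $t(v_L)$ to recover $2^{h-1}$ in the discarded term $\sum_{i>t(v_L)} |C_i(v_L)|$ is not obviously true and would need real work. The gap in your write-up is exactly here, and the fix is the paper's ``w.l.o.g.'' normalization rather than a charging scheme.
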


\begin{proof}
  We use a proof by induction on the height of $v$. For the base case,
  $v$ is a leaf. The base case follows from the fact that a leaf has
  height $0$ and both $C(v)$ and $D(v)$ are non-negative. We consider
  the inductive case next. Let $v_1$ and $v_2$ be the left and right
  children of $v$, respectively. Request $v$ is read in the
  $s(v)$-th phase. Suppose $v$ is served in the $i'$-th phase. 
  Since $C_i(v) =
  C_i(v_1) \cup C_i(v_2) \cup \{v\} $ if $v \in C_i(v)$ and
  $C_i(v) = C_i(v_1) \cup C_i(v_2)$ if $v \notin C_i(v)$, we get
  that
  \begin{align*}
    C(v) &= \sum_{i \in [s(v), t(v)]} |C_i(v_1)|+ |C_i(v_2)| + |[s(v), i'-1]|.
  \end{align*}

  We observe that $[s(v), t(v)] = [s(v_1), t(v_1)] \cup [s(v_2),
  t(v_2)]$, $s(v_1) = s(v)$ and $t(v_2) = t(v)$. Suppose that $i' \in
  [s(v_1), t(v_1)]$ and the request served just before $v$ is
  $v'$. Lemma \ref{lem:window} implies that the server is at
  $p_{i'-1}$ at the beginning of the $i'$-th phase, so w.l.o.g.
  $t(v') \leq t(v)$. The input sequence is obtained using the
  preordering of $S$, so the server has not read any request $u$ with
  $t(u) \in [s(v_2),t(v_2))$. Hence, we have $t(v') < s(v_2)$ so the
  server must have traversed at least $p_{s(v_2)}, p_{s(v)+1}, \ldots,
  p_{t(v_2)}$ to serve $v$. So, we have that $d(v) \geq 2^{h(v)-1}$.
  
  On the other hand, if $i' \in [s(v_2), t(v_2)]$ then, $|[s(v),
  i'-i]| \geq |[s(v_1), t(v_1)]|= 2^{h(v)-1}$. Thus, either
  $|[s(v),i'-1]|$ or $d(v)$ is at least $2^{h(v)-1}$. Since $D(v) =
  D(v_1) + D(v_2) + d(v)$, we get
  \begin{align*}
    C(v) + D(v) 
    &= C(v_1) + C(v_2) + |[s(v), i'-1]| + D(v_1) + D(v_2) + d(v)\\
    &\geq C(v_1) + C(v_2) + D(v_1) + D(v_2) + 2^{h(v)-1}\\
    &\geq \frac{(h(v)-1)}{2}2^{h(v)} + 2^{h(v)-1}\\
    &= \frac{h}{2}2^{h(v)}, 
  \end{align*}
where the second inequality follows from applying the inductive
hypothesis on both $v_1$ and $v_2$. 
\end{proof}

$\reduced$ cannot buffer more than $k/4$ requests at any point in time
therefore $|C_i| \leq k/4$ for all $i$. Applying Lemma
\ref{lem:induction} to the root $r$ implies that $D(r) \geq
\frac{k}{2}2^k - \frac{k}{4}2^k = \frac{k}{4}2^k$. Since 
$\OPT(k/4) \geq D(r)$, this completes the proof of Theorem
\ref{thm:counterexample}.

\end{document}